\newtheorem{lemma}{Lemma}
\newtheorem{corollary}{Corollary}
\newtheorem{theorem}{Theorem}
\newtheorem{example}{Example}
\title{On The Number of Optimal Linear Index Codes For Unicast Index Coding Problems}
\begin{document}
\author{Kavitha R., Niranjana Ambadi and B. Sundar Rajan
}\affil{Dept. of ECE, IISc, Bangalore 560012, India, Email: kavithar1991@gmail.com, bsrajan@ece.iisc.ernet.in.}
\date{\today}
\renewcommand\Authands{ and }
\maketitle
\thispagestyle{empty}	
\begin{abstract}
An index coding problem arises when there is a single source with a number of  messages and multiple receivers each wanting a subset of messages and knowing a different set of messages a priori. The noiseless Index Coding Problem is to identify the minimum number of transmissions (optimal length) to be made by the source through noiseless channels so that all receivers can decode their wanted messages using the transmitted symbols and their respective prior information. Recently \cite{TKCR}, it is shown that different optimal length codes perform differently in a noisy channel. Towards identifying the best optimal length index code one needs to know the number of  optimal length index codes. In this paper we present results on the number of optimal length index codes making  use of the representation of an index coding problem by an equivalent network code. Our formulation results in matrices of smaller sizes compared to the approach of Kotter and Medard \cite{KoM}. Our formulation leads to a lower bound on the minimum number of optimal length codes possible for all unicast index coding problems \cite{OnH} which is met with equality for several special cases of the unicast index coding problem. A method to identify the optimal length codes which lead to minimum-maximum probability of error is also presented.\footnote{Part of the content of this manuscript appears in \cite{KaR}}
\end{abstract}


\section{Introduction}
\IEEEPARstart{W}{e} consider the index coding problem first introduced by Birk and Kol in \cite{BiK}. In an index coding (IC) problem, there is a single source with a set of  messages and a set of  receivers,  each  wanting  a subset of messages and knowing a subset of messages (side-information) a priori.   A general index coding problem can be formulated as follows: There are $n$ messages, $ x_1, x_2, \ldots, x_n $ and $m$ receivers. Each receiver wants a subset of messages, $ W_i $ and knows a subset of messages $ K_i. $ For a  general unicast problem, $ W_i \cap  W_j =\emptyset$, for $ i \neq j. $ 

A single uniprior IC problem is a scenario where each receiver knows a single unique message (not known to other receivers) a priori  and a unicast problem is one  where each receiver wants a unique set of messages, i.e., the intersection of the messages wanted by any two receivers is nullset. The general scenario is called group-cast IC problem.  A single unicast is when the size of each of those wanted sets in a unicast problem is one. Without loss of generality a unicast problem can be reduced to a single unicast problem by increasing the number of receivers by splitting any receiver with more than one message into multiple receivers wanting only one message with identical side information. One needs to identify the minimum number of transmissions to be made so that all receivers can decode their wanted messages using the transmitted bits and their respective prior information. Ong and Ho in \cite{OnH} gave an algorithm which finds  the optimal length of a uniprior index coding problem. When a general unicast IC problem is modified to a single unicast IC problem one has $n=m.$ In this paper we consider single unicast IC  problem but the results  apply to a general unicast problem as well.

El Rouayheb \textit{et. al.} in [4] found that every index coding problem can be reduced to an equivalent network coding problem. An algebraic representation of network codes in terms of matrices representing the input mixing, topology and the output mixing operations  was given by Koetter and Medard in \cite{KoM}. In this paper we present a similar  algebraic characterization of a single unicast IC problem after reducing it to an equivalent network code. Harvey \textit{et.al} in \cite{HKM} proposed an algorithm for network codes for multicast problems, which is based on a new algorithm for maximum-rank completion of mixed matrices. Our problem is not a multicast problem. Hence the results in \cite{HKM} cannot be applied.

For a given index code a receiver may not use all the transmissions from the source. In fact, different receivers may use different number of transmissions of the source. It has been shown in \cite{TKCR} that there can be several linear optimal index codes  in terms of lowest number of transmissions for an IC problem, but among them one needs to identify the linear optimal index code which minimizes the maximum number of transmissions that is required by any receiver in decoding its desired message. The motivation for this comes from the fact that each of the transmitted symbols is error prone in a  wireless scenario and lesser the number of transmissions used in decoding the desired message, lesser will be its probability of error. Hence among all the codes with the same length, the one for which the maximum number of transmissions used by any receiver is the minimum, will have minimum-maximum error probability. This has already been discussed in \cite{TKCR} for single uniprior IC problems where a method to find a best linear solution in terms of minimum-maximum error probability among all codes with the optimal length is given.

The contributions  of this paper may be summarized as follows:
\begin{itemize}
\item A transfer matrix approach similar to that of Kotter and Medard \cite{KoM}, but with component matrices of much smaller sizes, is presented which enables identification of the optimal length for any unicast problem in terms of the component matrices of the transfer matrix.   
\item A lower bound on the number of optimal length codes for any single unicast index coding problem is obtained. This bound is shown to be exact for the following special cases:\\
(i) Single uniprior single unicast IC problems, \\
(ii) Single uniprior unicast IC problems and \\ 
(iii) Single unicast uniprior IC problems.
\item For  single uniprior unicast problems and single unicast uniprior problems,  we obtain the length of optimal linear ICs by reducing the problem to that of single uniprior single unicast IC problem. 
\item A criterion for optimal linear index codes with minimum-maximum probability of error is presented in terms of the component matrices for any single unicast problem.
\end{itemize}

The remaining content is organized as follows: In Section \ref{sec2}  the equivalence of index coding problem to network coding problem is discussed and in Section \ref{sec3} we obtain the input-mixing matrix, transfer matrix and the output-mixing matrices for the index coding problem and show that they can be partitioned into submatrices corresponding to the side-information and the index code used.  Bounds on the number of optimal linear index codes is presented in Section \ref{sec4} and the number of codes with optimal length is discussed in Section \ref{sec5}. In Section \ref{sec6} a method to identify optimal codes with minimum-maximum error probability is described and simulation results are presented.

\section{Equivalent network coding problem}
\label{sec2}
Throughout the paper we assume that the  operations are over the finite field with two elements $\mathbb{F}_{2}$. But the results  easily  carry over to other finite field.
\begin{figure}[htbp]
\centering
\includegraphics[scale=.75]{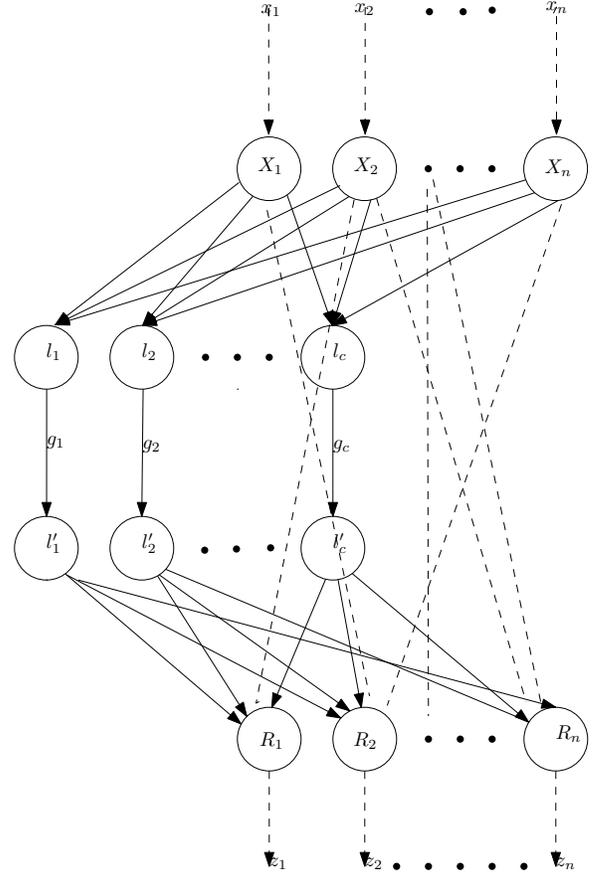}
\caption{Representation of a unicast IC problem by an equivalent network code.}
\label{fig1}
\end{figure}
Any single unicast problem can be represented by an equivalent network coding problem as in Fig. \ref{fig1}. This was proposed by El Rouayheb \textit{et. al.} in \cite{RSG}. Suppose the length of the index code, not necessarily optimal, be $c.$ Each of the messages $ x_1, x_2,\ldots, x_n $ is represented by a source node and $ g_1, g_2,\ldots, g_c $ represent the  broadcast channel and $l_{1},l_{2},\ldots,l_{c},l'_{1},l'_{2},\ldots,l'_{c}$ represent the intermediate  nodes. It is assumed that when two or more edges have the same tail node with only one incoming edge to it they carry the same message. Hence,  $l'_{i}$ transmits to its outgoing edges whatever it gets through $g_{i}$. The source nodes $X_i$ transmit their respective messages as such through their outgoing edges. The minimum possible  value of $c$ among all linear solutions of an IC problem is the optimal length. The dashed lines represent the connection between a receiver node and a source message node whose message is known to the receiver apriori,  i.e, they represent the side information possessed by the receivers. For every single unicast problem, we can find a graph like given in Fig. \ref{fig1}, which we will denote as $G$ henceforth. The graph $G$ can be represented as $G=(V, E)$, where $V=\lbrace x_{1}, x_{2}, \ldots, x_{n}, l_{1}, l_{2}, \ldots, l_{c}, l'_{1}, l'_{2}, \ldots, l'_{c}, R_{1}, R_{2}, \ldots, R_{n} \rbrace$ is the vertex set and $E$ is the edge set. It is easy to see that  $\mid E \mid$ = $(2n+1)c + \overset{n}{\underset{i=1}{\sum}}\mid K_{i} \mid $. An edge connecting vertex $ v_{1}$ to $v_{2}$ is denoted by ($v_{1},v_{2}$) where $v_{1}$ is the tail of the edge and $v_{2}$ is the head of the edge. For an edge $e$, $Y(e)$ represents the bit passing through  that edge. We can get a transfer matrix $ M_{n \times n} $ (which is shown in Section \ref{sec2}) such that $ \b{Z} = [   z_{1} ~z_{2}~ \ldots ~z_{n}  ]^T, $ the vector of output messages from all the receivers, can be expressed as 
     \begin{equation}
\label{eqn1}
     Z=  M~X,
     \end{equation}
      where ${X}$ = $[ x_{1} ~  x_{2} ~\ldots~  x_{n}]^T$, the vector of input messages. Hence, the considered index code of length $c$ gives a  solution  in $c$ number of transmissions if $M$ is the identity matrix.
\begin{figure*}
\begin{eqnarray}
\b{Y}^T=[ Y((x_{1},l_{1}))~ Y((x_{1},l_{2}))~ \ldots ~ Y((x_{1},l_{c}))\notag\\  Y((x_{2},l_{1}))~ Y((x_{2},l_{2})) ~ \ldots ~ Y((x_{2},l_{c}))\notag\\ \vdots \notag\\  Y((x_{n},l_{1}))~ Y((x_{n},l_{2}))~\ldots ~  Y((x_{n},l_{c}))\notag\\  Y((x_{K_{1,1}},R_{1}))~Y((x_{K_{1,2}},R_{1}))\ldots~ Y((x_{K_{1,\mid K_{1}\mid}},R_{1}))\notag\\ Y((x_{K_{2,1}},R_{2}))~Y((x_{K_{2,2}},R_{2}))~\ldots ~Y((x_{K_{2,\mid K_{2}\mid}},R_{2}))\notag \\  \vdots \notag \\ Y((x_{K_{n,1}},R_{n}))~Y((x_{K_{n,2}},R_{n}))~Y((x_{K_{n,\mid K_{n}\mid}},R_{n}))] 
\label{eqn:YY}
\end{eqnarray}
\hrule
\end{figure*}
~ \\

\section{Transfer function matrices for an index code}
\label{sec3}

For a general single unicast problem, we can find a matrix $M_{n\times n}$ in \eqref{eqn1} such that  $M$ is a product of three matrices as \footnote{We are not following Koetter and Medard's approach \cite{KoM}. The approach in \cite{KoM} would have given  matrix $A$ of order ($\mid E \mid \times n$), $F$ of order ($\mid E \mid \times \mid E \mid$) and $B$ of order ($n \times\mid E \mid $) where $\mid E \mid$ = $(2n+1)c + \overset{n}{\underset{i=1}{\sum}}\mid K_{i} \mid $. Our formulation results in  matrices $A$, $F$ and $B$  for a given index coding problem, where $\mid E \mid$ = $nc + \overset{n}{\underset{i=1}{\sum}}\mid K_{i} \mid $. Notice that our matrices $A,$ $F$ and $B$ are of much smaller sizes compared to the sizes one deals with by following the approach in \cite{KoM}.} 
\begin{equation}
M=B~F~ A
\label{eqn:M}
\end{equation}
\noindent where the matrix $A$ relates the input messages and the messages flowing through the outgoing edges of all the source nodes, the matrix $F$ relates to the messages sent in the broadcast channel and the side information possessed by the the receivers, and  the matrix $B$ describes the decoding operations done at the receivers. All these three matrices can be partitioned in to two parts one corresponding to the index codeword transmission and the other corresponding to only the side information. Now we proceed to describe these partitioning. \\

\noindent {\bf Partitioning of $A$:} The matrix $A$ satisfies the relation
\begin{equation}
 {Y}=  A ~ {X},\\
 \label{eqn:Y}
\end{equation}
\begin{figure*}
 \begin{eqnarray}
 \b{Y}'^T= [Y((l'_{1},R_{1}))~Y((l'_{1},R_{2}))~ \ldots Y((l'_{1},R_{n}))\notag\\Y((l'_{2},R_{1}))~Y((l'_{2},R_{2}))~\ldots Y((l'_{2},R_{n}))\notag\\\vdots \notag\\ Y((l'_{c},R_{1})) ~Y((l'_{c},R_{2}))~\ldots ~Y((l'_{c},R_{n})) \notag \\Y((x_{K_{1,1}},R_{1}))~Y((x_{K_{1,2}},R_{1}))\ldots Y((x_{K_{1,\mid K_{1}\mid}},R_{1}))\notag\\ ~Y((x_{K_{2,1}},R_{2}))~Y((x_{K_{2,2}},R_{2})) ~\ldots Y((x_{K_{2,\mid K_{2}\mid}},R_{2})) \notag \\ \vdots \notag \\ Y((x_{K_{n,1}},R_{n}))~Y((x_{K_{n,2}},R_{n}))\ldots Y((x_{K_{n,\mid K_{n}\mid}},R_{n})) ]
  \label{eqn:YYY}
  \end{eqnarray}
  \vspace{-.3 cm}
\end{figure*}
  \begin{figure*}
\hrule
\vspace{.4 cm}
\begin{equation} 
\begin{footnotesize}
F_{BC}=\left[ \begin{array}{cccccccccccccc}
\beta_{X_{1},l_{1}}&0& \ldots &0&\beta_{X_{2},l_{1}}&0& \ldots &0&\ldots &\beta_{X_{n},l_{1}}&0& \ldots &0 \\
\beta_{X_{1},l_{1}}&0& \ldots &0&\beta_{X_{2},l_{1}}&0& \ldots &0&\ldots &\beta_{X_{n},l_{1}}&0& \ldots &0 \\
.\\
.\\
\beta_{X_{1},l_{1}}&0& \ldots &0&\beta_{X_{2},l_{1}}&0& \ldots &0&\ldots &\beta_{X_{n},l_{1}}&0& \ldots &0 \\
0&\beta_{X_{1},l_{2}}& \ldots &0&0&\beta_{X_{2},l_{2}}& \ldots &0&\ldots &0&\beta_{X_{n},l_{2}}& \ldots &0\\
0&\beta_{X_{1},l_{2}}& \ldots &0&0&\beta_{X_{2},l_{2}}& \ldots &0&\ldots &0&\beta_{X_{n},l_{2}}& \ldots &0\\
.\\
.\\

0&\beta_{X_{1},l_{2}}& \ldots &0&0&\beta_{X_{2},l_{2}}& \ldots &0&\ldots &0&\beta_{X_{n},l_{2}}& \ldots &0\\
.\\
.\\
.\\
0&0& \ldots &\beta_{X_{1},l_{c}}&0&0& \ldots &\beta_{X_{2},l_{c}}&\ldots &0&0& \ldots&\beta_{X_{n},l_{c}}\\
0&0& \ldots &\beta_{X_{1},l_{c}}&0&0& \ldots &\beta_{X_{2},l_{c}}&\ldots &0&0& \ldots &\beta_{X_{n},l_{c}}\\
.\\
.\\
0&0& \ldots &\beta_{X_{1},l_{c}}&0&0& \ldots &\beta_{X_{2},l_{c}}&\ldots &0&0& \ldots &\beta_{X_{n},l_{c}}\end{array} \right]
 \end{footnotesize}
  \label{eqn:fsub}
 \end{equation}
 \hrule
 \end{figure*}
\noindent
where  ${Y}^T$ is as in (\ref{eqn:YY}), with $K_{i,j}$ denoting the index of $j$-th message in the side information set of receiver $R_{i}$ and ${X} =[ x_{1} ~ x_{2} ~ x_{3}\ldots ~x_{n}]^T$ is the vector of input messages. The vector $\b{Y}$ is the vector of messages flowing through the outgoing edges of all the source nodes and is of order $((nc+\overset{n}{\underset{i=1}{\sum}} \mid K_{i} \mid)\times 1 )$. The matrix $A$ is of order $ (nc+\overset{n}{\underset{i=1}{\sum}} \mid K_{i} \mid) \times n $ and it can be split in the form,
 \begin{eqnarray}
 A=\left[ \begin{array}{l}
 A_{BC} \\
 A_{SI}\\
 \end{array} \right]
 \end{eqnarray}
\noindent where $ A_{BC} $  is of order $nc \times n$ and $A_{SI}$ is of order $\overset{n}{\underset{i=1}{\sum}}\mid K_{i} \mid \times n$ (the subscript $BC$ stands for "broad cast part" and the subscript $SI$ for "side information part"). The matrix $A_{BC}$ is a matrix formed by row-concatenation of matrices $A_{i}$, $i=1,\ldots n$ where each $A_{i}$ is a $c \times n$ matrix in which all elements in the $i$-th column are ones and the rest all are zeros as given in (\ref{ll}).

{\footnotesize
 \begin{equation}
  \begin{array}{r@{}}
    \text{$A_{1}$}~\left\{\begin{array}{ccccccc}\null\\\null\\\null\\\null\\\null\end{array}\right.\\
     \text{$A_{2}$}~\left\{\begin{array}{ccccccc}\null\\\null\\\null\\\null\\\null\end{array}\right.\\
      \text{$.$}\\
      \text{$.$}\\
       \text{$A_{n}$}~\left\{\begin{array}{ccccccc}\null\\\null\\\null\\\null\\\null\end{array}\right.
  \end{array}
  \left [
   \begin{array}{cccccccccccccc}
1& 0 & 0 & 0& \ldots &0\\
1& 0 & 0 & 0& \ldots&0\\
.\\
.\\
1& 0 & 0 & 0& \ldots &0\\
0 &1& 0&0& \ldots &0 \\
0 &1& 0&0 & \ldots &0 \\
.\\
.\\
0 &1& 0&0& \ldots &0 \\
.\\
.\\
0 &0 &0 &0& \ldots &1\\
0 &0 &0 &0& \ldots &1\\
.\\
.\\
0 &0 &0 &0& \ldots &1
   \label{ll}     
    \end{array}
  \right ]
\end{equation}
}
\noindent 
Each $A_{i}$ corresponds to the message passed by the source node $x_{i}$ to the intermediate nodes, $l_{j}$, $j=1,\ldots,c$. 

The matrix  $A_{SI}$ has only one non-zero element (which is one) in each row. This matrix corresponds to the side information possessed by the receivers and each successive set of $\mid K_{i} \mid$ rows correspond to the side information possessed by $ R_{i} $ for $i=1$ to $n$. In each set of $ \mid K_{i} \mid $ rows, each row is distinct and has only one non-zero element  which occupies the respective column position of one of the messages in the prior set of $R_{i}$. 

Notice that  matrix $A$ is fixed for a fixed $c$ and does not depend on the index code except on its length. \\

\begin{figure*}
  \hrule
~ \\
  \begin{equation}
  \begin{footnotesize}
  B_{BC}=\left[ \begin{array}{ccccccccccccccccc}
\epsilon_{l'_{1},R_{1}}&0&0& \ldots &0&\epsilon_{l'_{2},R_{1}}&0&0& \ldots &0& \ldots &\epsilon_{l'_{c},R_{1}}&0&0& \ldots &0\\
0&\epsilon_{l'_{1},R_{2}}&0& \ldots &0&0&\epsilon_{l'_{2},R_{2}}&0& \ldots &0& \ldots &0&\epsilon_{l'_{c},R_{2}}&0& \ldots &0\\
 ....\\
 .....\\
 .....\\
 0&0&0& \ldots &\epsilon_{l'_{1},R_{n}}& 0&0&0& \ldots &\epsilon_{l'_{2},R_{n}}& \ldots &0&0&0& \ldots &\epsilon_{l'_{c},R_{n}}
 \end{array} \right]
 \end{footnotesize}
 \label{eqn:Bsub}
  \end{equation}
\hrule
 \end{figure*} 
\noindent {\bf Partition of the matrix $F$:} The matrix $F$ which relates the message symbols to the transmission symbols sent in the broadcast channel and the side information possessed by the the receivers is of order $(nc+\overset{n}{\underset{i=1}{\sum}}\mid K_{i} \mid)\times  (nc+\overset{n}{\underset{i=1}{\sum}}\mid  K_{i} \mid)$. It satisfies
  \begin{equation}
   {Y}' =F~{Y}=F~A~{X},
   \label{eqn:Y'}
   \end{equation}
\noindent  where ${Y}'^T$ is as in (\ref{eqn:YYY}), and is the vector of messages flowing to each of the receiver. We can observe that $F$ can be split into four block matrices as  
  \begin{eqnarray}
                F=\left[ \begin{array}{cc}
 F_{BC} & 0\\
  0  &I \end{array} \right].
  \label{eq:F}
  \end{eqnarray}
\noindent The Matrix $F_{BC}$ is a square matrix of order $nc$ which is of the form given in (\ref{eqn:fsub}) and $I$ is the identity matrix of size $\sum_{i=1}^n |K|_i.$ The elements $\beta_{X_{i},l_{j}},\forall i=1,\dots,n$ and $j=1,\ldots,c$ belong to $\mathbb{F}_{2}$. All the  $((i-1)n+1$)-th to $((i-1)n+n)$-th row are identical for $i=1,2,\ldots,c$. If  any one of these rows is denoted by  $t_{i}$, then we have the $i-$th transmission symbol of the code given by
  \begin{equation}
\label{gsubi}
  g_i=t_{i}~A_{BC}~\b{X}=\sum_{j=1}^{n} \beta_{X_j,l_i}x_j
  \end{equation}
  for $i=1,2,\ldots,c$. Notice that we are using $g_i$ to denote both the $i-$th symbol of the index codeword as well as the edge carrying that symbol. \\

 \begin{figure*}
   \hrule
   \vspace{.2 cm}
\begin{equation}
 F_{BC}=\left[ \begin{array}{cccccccccccccccccccc}
\beta_{X_{1},l_{1}}&0& \beta_{X_{2},l_{1}}&0 & \beta_{X_{3},l_{1}}&0  \\
\beta_{X_{1},l_{1}}&0& \beta_{X_{2},l_{1}}&0 & \beta_{X_{3},l_{1}}&0  \\
\beta_{X_{1},l_{1}}&0& \beta_{X_{2},l_{1}}&0 & \beta_{X_{3},l_{1}}&0  \\
0&\beta_{X_{1},l_{2}}&  0&\beta_{X_{2},l_{2}} &0&\beta_{X_{3},l_{2}}  \\
0&\beta_{X_{1},l_{2}}&  0&\beta_{X_{2},l_{2}} &0&\beta_{X_{3},l_{2}}  \\
0&\beta_{X_{1},l_{2}}&  0&\beta_{X_{2},l_{2}} &0&\beta_{X_{3},l_{2}}  
\end{array} \right]  
\label{eqn:example1fsub}
\end{equation}
 \hrule
\end{figure*}
\begin{figure*}
\begin{equation}
  B=\left[ \begin{array}{cccccccccccccc}
\epsilon_{(l'_{1},R_{1})}&0&0&\epsilon_{(l'_{2},R_{1})}&0&0 & \epsilon_{(x_{2},R_{1})}&0&0\\
0&\epsilon_{(l'_{1},R_{2})}&0&0&\epsilon_{(l'_{2},R_{2})}&0&0& \epsilon_{(x_{3},R_{2})}&0\\
0&0&\epsilon_{(l'_{1},R_{3})}&0&0&\epsilon_{(l'_{2},R_{3})} &0&0& \epsilon_{(x_{1},R_{3})}
 \end{array} \right]
 \label{bexample1}
\end{equation}
 \hrule
\end{figure*}
\noindent {\bf Partitioning of matrix $B$:}  The matrix $B$ that describes the decoding operations done at the receivers is of order $n \times (nc+\overset{n}{\underset{i=1}{\sum}}\mid K_{i} \mid)$. It satisfies the relation,
  \begin{equation}
   {Z} =B~{Y}'=B~F~A~{X}.
   \label{eqn:B}
   \end{equation}
In terms of the symbols, 
\begin{equation}
\begin{array}{cl}
z_j &= \sum_{i=1}^c \epsilon_{l'_i,R_j} g_i \\
    &= \sum_{i=1}^c \epsilon_{l'_i,R_j} \left( \sum_{k=1}^n \beta_{x_k,l_i} x_k  \right) \\
    &= \sum_{i=1}^c \sum_{k=1}^n \epsilon_{l'_i,R_j}  \beta_{x_k,l_i} x_k 
\end{array}
\end{equation}
gives the symbols obtained by the receivers after all the operations.\\
\noindent The  matrix $B$ can be split into two block matrices as 
\begin{equation}
  B =\left[ \begin{array}{cc}
 B_{BC} & B_{SI}  \end{array} \right] ,
 \end{equation}
where  $ B_{BC}$ is a matrix of order $ n \times nc$ and in every row only $c$ elements are non-zero and the non-zero elements correspond to whether or not $R_{i}$ uses that particular transmission to decode its wanted message. The matrix $ B_{SI}$ is of order $ n \times  \overset{n}{\underset{i=1}{\sum}}\mid K_{i} \mid$. It relates to the side information possessed by the receivers. In this matrix all elements except the $i$-th element in every successive set of $\mid  K_{i}\mid$ columns are zeros, for all $i=1$ to $n$. The rest of the elements are either one or zero  and it depends on the messages used by a receiver  to decode its wanted message. The matrix $B_{B}$ is as in (\ref{eqn:Bsub}). The elements $\epsilon_{l'_{j},R_{i}}$ for $j=1,\ldots, c$ and $i=1,\ldots,n$ belong to $\mathbb{F}_{2}$. From (\ref{eqn:Y}), (\ref{eqn:Y'}) and (\ref{eqn:B}), we get\\
 \begin{equation}
   {Z}  =B~F~A~{X}.
   \label{eqn:relation}
   \end{equation}
   So,
    \begin{equation}
   M  = B~F~A.
   \label{eqn:relationm}
   \end{equation}
 An index coding problem  is solvable with $c$ number of transmissions if  we can find variables $\beta$'s in $F_{BC}$ which define the code and the variables $\epsilon$'s in $B_{BC}$ which define the decoding operations  such that $M$ is an identity matrix.

The following example illustrates the partitioning of the matrices $A,$ $F$ and $B.$
\begin{figure}[htbp]
\centering
\includegraphics[scale=.75]{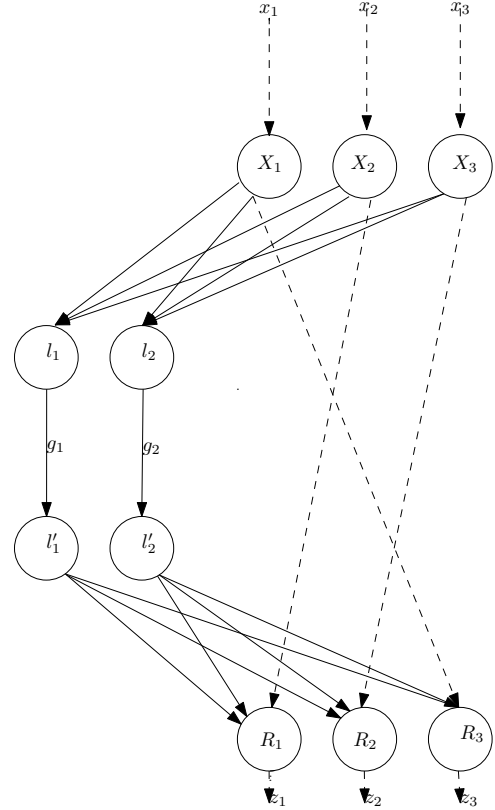}
\caption{ Equivalent network code corresponding to the IC problem in Example \ref{ex1}}
\end{figure}

\begin{example}
\label{ex1}
Let $ m=n=3$. Each $R_{i}$ wants $x_{i}$ and knows $x_{i+1}$, where $+$ is mod-3 addition. The optimal length of a linear IC solution for this problem is $2,$  which is established in {\it Example 1 (continued)} Section \ref{sec4} by way of showing that length one codes do not exist. The graph G for $c=2$ is as in Fig. 2.

The $A$ matrix is as  below.
\begin{eqnarray}
A = \left[ \begin{array}{ccc}
1 & 0 & 0\\
1& 0 & 0\\
0 & 1 & 0 \\
0 & 1 & 0 \\
0 & 0  &1 \\
0 & 0  &1 \\
0 & 1 & 0 \\
0 & 0  &1\\
1& 0 & 0
 \end{array} \right] 
\end{eqnarray}
The last three rows of the above matrix is $A_{SI}.$ The matrix $F_{BC}$ is as in (\ref{eqn:example1fsub}) and $B$ matrix is as in (\ref{bexample1}). There are three  linear codes which are optimal in terms of length. They are 
$$\mathfrak{C}_{1}= \left\{ x_{1}   \oplus   x_{2} ,  x_{2}   \oplus   x_{3}\right\},$$  
$$\mathfrak{C}_{2}= \left\{x_{1}   \oplus   x_{3},  x_{3}   \oplus   x_{2} \right\} ,$$   
$$\mathfrak{C}_{3}= \left\{ x_{1}  \oplus   x_{3} ,  x_{1}   \oplus  x_{2} \right\}.$$ 
For the codes  $\mathfrak{C}_{1}$,  $\mathfrak{C}_{2}$ and $\mathfrak{C}_{3},$ the matrices $F_{BC}$ and $B$ are as in (\ref{eqn:example1Bsubone}), (\ref{eqn:example1Bsubone1}) and (\ref{eqn:example1Bsubone2}) respectively. The matrix $B_{SI}$ is given by 
\begin{eqnarray}
B_{SI} = \left[ \begin{array}{ccc}
x & 0 & 0\\
0& x & 0\\
0 & 0 & x 
 \end{array} \right] 
\end{eqnarray}
\noindent
where $x$ is either $0$ or $1$.
\end{example}
\begin{figure*}
{\small 
\begin{eqnarray}
 F_{BC}=\left( \begin{array}{cccccccccccccccccccc}
1&0& 1 &0 & 0&0  \\
1&0& 1 &0 & 0&0  \\
1&0& 1 &0 & 0&0  \\
0&0&  0&1 &0&1  \\
0&0&  0&1 &0&1  \\
0&0&  0&1 &0&1  
\end{array} \right),
& B=\left( \begin{array}{cccccccccccccc}
1&0&0&0&0&0 & 1 &0&0\\
0&0&0&0&1&0&0& 1&0\\
0&0&1&0&0&1 &0&0& 1
\end{array} \right)
\label{eqn:example1Bsubone}
\end{eqnarray}
\begin{eqnarray}
F_{BC}=\left( \begin{array}{cccccccccccccccccccc}
1&0& 0 &0 & 1&0  \\
1&0& 0 &0 & 1&0  \\
1&0& 0 &0 & 1&0  \\
0&0&  0&1 &0&1  \\
0&0&  0&1 &0&1  \\
0&0&  0&1 &0&1  
\end{array} \right),
& B=\left( \begin{array}{cccccccccccccc}
1&0&0&1&0&0 & 1 &0&0\\
0&0&0&0&1&0&0& 1&0\\
0&0&1&0&0&0 &0&0& 1
\end{array} \right)
\label{eqn:example1Bsubone1}
\end{eqnarray}
\begin{eqnarray}
F_{BC}=\left( \begin{array}{cccccccccccccccccccc}
1&0& 0 &0 & 1&0  \\
1&0& 0 &0 & 1&0  \\
1&0& 0 &0 & 1&0  \\
0&1&  0&1 &0&0  \\
0&1&  0&1 &0&0  \\
0&1&  0&1 &0&0  
\end{array} \right),
& B=\left( \begin{array}{cccccccccccccc}
0&0&0&1&0&0 & 1 &0&0\\
0&1&0&0&1&0&0& 1&0\\
0&0&1&0&0&0 &0&0& 1
\end{array} \right)
\label{eqn:example1Bsubone2}
\end{eqnarray}
}
\hrule
\end{figure*}
\section{Bounds on the number of optimal linear ICs}
 \label{sec4}
We have analyzed the structures of the three matrices in the previous section. We need $M =  B~F~A$ to be $I_n,$ the identity matrix. Here for a fixed length $c$, $A $ is fixed and as can be verified all the columns of $A $ are independent and hence the rank of $A$ is $n$. So, the rows of $I_n$ lie in the row space of $A$. Therefore, the equation
$$T_{n \times (nc+\overset{n}{\underset{i=1}{\sum}}\mid K_{i}\mid)}A=I_n $$
\noindent 
has at least one solution for the variables of the matrix  $T$ which is a left-inverse of $A.$ Observe that the number of free variables  in $T$ is  $( {n^{2}c-n^{2}+n\overset{n}{\underset{i=1}{\sum}}\mid K_{i}\mid})$ and the number of pivot variables is $n^{2}$ \cite{Str}. Hence the number of left inverses of $A$ is $ 2^{n^{2}c-n^{2}+n\overset{n}{\underset{i=1}{\sum}}\mid K_{i}\mid} $. We need to find a  matrix $T$ which is a left inverse of  $A$  as well as is a product $BF$ of some  $B$ and $F$ in the required form. Let us denote  the set of all left inverses of $A$ which are of the form $BF$ in the required form, by  $S(c)$ since it is a function of $c.$ Note that  $B$ and $F$ are needed to be in the form that includes the constraints imposed by the side-information of the index coding problem.  Since 
\begin{equation}
BF=
\left[ \begin{array}{cc}
B_{BC} B_{SI} 
\end{array} \right] 
\left[ \begin{array}{cc}
F_{BC} & 0 \\
 0  &I
\end{array} \right] 
=
\left[  \begin{array}{c}
 B_{BC}F_{BC} ~~B_{SI}
\end{array} \right],
\end{equation} 
\noindent let $T= \left[ \begin{array}{cc} T_{BC} T_{SI}  \end{array} \right].$  This gives 
\begin{equation}
T _{SI} =  B_{SI}.
\label{addiconstraint}
\end{equation}
So the positions which are to be  occupied by zeros in $ B_{SI}$ are needed to be zeros in $ T_{SI} $ also. Therefore, $ T_{SI} $ which is of order $n  \times \overset{n}{\underset{i=1}{\sum}}\mid  K_{i} \mid$  has $(n-1)(\overset{n}{\underset{i=1}{\sum}} \mid  K_{i} \mid)$ zeroes and  when the rest of the elements of $T_{SI}$ are fixed, $ B_{SI}$ also gets fixed. Let the left inverses of $A$ that satisfy the constraint \eqref{addiconstraint} be denoted by  $S^\prime(c)$. Clearly $S(c) \subseteq   S^\prime(c) $.  As the rank of  $A$  is $n$, the total number of  left inverses of  $A$  with restrictions given by \eqref{addiconstraint} is 
\begin{equation}
\label{sizeofSdash}
|S^\prime (c)|=2^{n^{2}c-n^{2}+\overset{n}{\underset{i=1}{\sum}}\mid K_{i}\mid}.
\end{equation}
Since $S(c) \subseteq   S'(c) $, we have,
\begin{equation}
  \vert  S(c)  \vert   \leq  2^{n^{2}c-n^{2}+\overset{n}{\underset{i=1}{\sum}} \mid K_{i}\mid}.
  \label{23}
 \end{equation}
 We need to identify the elements in the set $S'(c)$ which also belong to $S(c)$. First of all, when we fix $T$, the submatrix $B_{SI}$ gets fixed. So, for a pair $(B,F)$ which belongs to set $S(c)$, we have
  \begin{equation}
 B_{BC} F_{BC}=T_{BC}.
 \label{FF} 
   \end{equation}          
 From (\ref{FF}) we get relations of the form,
 \begin{eqnarray}
 \label{eqn:columncond}
 \left[ \begin{array}{c}
 \epsilon_{l'_{i},R_{1}}\\
 .\\
 .\\
 \epsilon_{l'_{i},R_{n}}\end{array} \right]  
    \begin{array}{c}
 \beta_{X_{k},l'_{i}}
\end{array}   =\left[ \begin{array}{c}
 T_{col'_{(k-1)c+i}}
\end{array} \right]
\end{eqnarray} 
$ \forall   k \in \lbrace1,2,...n\rbrace $ and $\forall i \in \lbrace1,2,....c\rbrace $
where $ T_{col_{i}}$ is the $i$-th column of $T_{BC}$.

For $i=1,2,...,c,$ we define the $n \times n$ matrix  ${\cal R}_i$ as  consisting of the $n$ columns  $ \lbrace   T_{col_{i}},   T_{col_{c+i}} ... T_{col_{(n-1)c+i}}  \rbrace $ of the matrix $T_{BC}.$ Notice that the matrix ${\cal R}_i$ consists of transmissions to all the receivers  from the $i-$th transmission of the index code. Also note that
 \begin{eqnarray}
 \label{Rsubi}
\begin{array}{c}
{\cal R}_i
\end{array}
=
\left[ \begin{array}{c}
 \epsilon_{l'_{i},R_{1}}\\
 .\\
 .\\
 \epsilon_{l'_{i},R_{n}}\end{array} 
\right]  
\left[
    \begin{array}{cccccc}
 \beta_{X_{1},l_{i}}~~
\beta_{X_{2},l_{i}}~~
. . .~~
\beta_{X_{n},l_{i}}

\end{array}   
\right]
\end{eqnarray}
and 
{\footnotesize
 \begin{eqnarray}
 \label{Rsub2}
{\cal R}_i
\left[
\begin{array}{c}
x_1 \\
x_2 \\
. \\
. \\
. \\
x_n
\end{array}
\right]
\hspace{-0.1cm}
=
\hspace{-0.1cm}
\left[ \begin{array}{c}
 \epsilon_{l'_{i},R_{1}}\\
 .\\
 .\\
 \epsilon_{l'_{i},R_{n}}\end{array} 
\right]  
\hspace{-0.1cm}
\left[
    \begin{array}{cccccc}
 \beta_{X_{1},l_{i}}~
\beta_{X_{2},l_{i}}~
. . .~
\beta_{X_{n},l_{i}}

\end{array}   
\right]
\hspace{-0.1cm}
\left[
\begin{array}{c}
x_1 \\
x_2 \\
. \\
. \\
. \\
x_n
\end{array}
\right].
\end{eqnarray}
}
\noindent 
Equation \eqref{Rsub2} above shows that the contribution of the $i-$th transmission of an index code is completely captured by the matrix ${\cal R}_i.$

\begin{lemma}
Any matrix  $T$ which belongs to $S^\prime(c)$ also belongs to $S(c)$  if and only if ${\cal R}_i$ is a all-zero matrix or is a rank one matrix, for all $i=1,2,...,c.$
\end{lemma}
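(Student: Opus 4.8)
The plan is to reduce the question ``does a given $T\in S^\prime(c)$ also lie in $S(c)$?'' to a rank condition on the blocks ${\cal R}_i$, handled one transmission at a time. Membership in $S^\prime(c)$ already forces $T_{SI}=B_{SI}$ via \eqref{addiconstraint}, so the only thing left to decide is whether the broadcast part $T_{BC}$ can be written as a product $B_{BC}F_{BC}$ with $B_{BC}$ and $F_{BC}$ carrying the prescribed sparsity patterns of \eqref{eqn:Bsub} and \eqref{eqn:fsub}. Equation \eqref{eqn:columncond} shows that this requirement splits column-by-column: writing $u_i=[\epsilon_{l'_i,R_1}~\cdots~\epsilon_{l'_i,R_n}]^T$ and $v_i=[\beta_{X_1,l_i}~\cdots~\beta_{X_n,l_i}]^T$, the $n$ columns of $T_{BC}$ indexed $(k-1)c+i$, $k=1,\dots,n$, are exactly $\beta_{X_k,l_i}\,u_i$, and collecting them gives ${\cal R}_i=u_iv_i^T$ as in \eqref{Rsubi}. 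Thus finding admissible $B_{BC},F_{BC}$ is precisely the problem of factoring each ${\cal R}_i$ as an outer product $u_iv_i^T$ over $\mathbb{F}_2$.

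The structural observation I would emphasize is that the coefficient variables decouple across transmissions: the entries of $u_i$ and $v_i$ (the $\epsilon_{l'_i,\cdot}$ and $\beta_{\cdot,l_i}$) occur in no other block ${\cal R}_j$ with $j\neq i$. Consequently the $c$ factorization problems ${\cal R}_i=u_iv_i^T$ are mutually independent, and $T\in S(c)$ holds if and only if each of them is individually solvable. This is what lets me argue per transmission and then reassemble a single valid pair $(B,F)$.

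For necessity I would suppose $T\in S(c)$; then admissible $B,F$ exist, and by the computation above each ${\cal R}_i$ equals the outer product $u_iv_i^T$. An outer product of two vectors has rank at most one, so every ${\cal R}_i$ is either the all-zero matrix or a rank-one matrix. For sufficiency I would suppose every ${\cal R}_i$ is zero or rank one. A rank-one matrix over $\mathbb{F}_2$ can be written as $u_iv_i^T$ with $u_i,v_i$ nonzero (its columns all lie on a single line through the origin, furnishing $u_i$, and the scalars expressing each column in terms of $u_i$ furnish $v_i$), while the zero matrix is $u_iv_i^T$ with a zero factor. Reading the entries of $u_i$ and $v_i$ off as the $\epsilon_{l'_i,R_j}$ and $\beta_{X_k,l_i}$ and placing them in the slots dictated by \eqref{eqn:Bsub} and \eqref{eqn:fsub} (all remaining entries zero) yields matrices $B_{BC},F_{BC}$ of exactly the required form with $B_{BC}F_{BC}=T_{BC}$; combined with $B_{SI}=T_{SI}$ this gives $BF=T$, so $T\in S(c)$.

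The one step needing genuine care, rather than bookkeeping, is the sufficiency factorization: I must check that an abstract rank-one (or zero) matrix is always realizable inside the rigid sparsity templates of $B_{BC}$ and $F_{BC}$. This is where the decoupling observation does the real work, since it guarantees that the per-transmission choices of $u_i,v_i$ never conflict and assemble into a single globally admissible $(B,F)$. The outer-product characterization of rank-one matrices itself is standard linear algebra and remains valid verbatim over $\mathbb{F}_2$.
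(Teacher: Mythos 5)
Your proposal is correct and takes essentially the same route as the paper: both reduce membership in $S(c)$ to the question of factoring each block ${\cal R}_i$ as an outer product $u_iv_i^T$ via \eqref{Rsubi}, with the rank-$\le 1$ property of outer products giving necessity and the factorability of zero or rank-one matrices over $\mathbb{F}_2$ giving sufficiency. If anything, your writeup is more complete than the paper's, which in its `if' part simply asserts that suitable $\epsilon$'s and $\beta$'s can be found, whereas you supply the decoupling-across-transmissions observation and the explicit $\mathbb{F}_2$ outer-product factorization that justify it.
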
 
\begin{proof}  (Only-if part): If $T\in S(c)$, then from \eqref{Rsubi}, it follows that either  ${\cal R}_i$ is a all-zero matrix or its rank is one. \\
\noindent Proof of 'if part' : For a $T \in S^\prime (c),$ for all $i,$ one can  always find values for variables $\epsilon$'s and $\beta$'s  satisfying \eqref{Rsubi}. Hence one can get a pair $(B,F)$ such that (\ref{FF}) is satisfied by substituting these values. Hence $T\in S(c)$. This completes the proof. 
\end{proof}
 \begin{figure*}
 \vspace{-.5 cm}
  \begin{eqnarray}
\left[ \begin{array}{cccccccccc}
1 & 0 &0 &\ldots &p_{ {1, \lbrace j :K_{j}=x_{1}\rbrace}} & 0 &\ldots &0\\
 0  &  1 & 0& \ldots &p_{{2,\{ j' :K_{j'}=x_{2} \}}} & 0& \ldots &0\\
 .\\
 .\\
 0 &0 &0 &\ldots &p_{{n,\{ j'' :K_{j''}=x_{n}\}}} & 0&\ldots &1
\end{array} \right]
 \label{eqn:minnum3}
\end{eqnarray}
\hrule
\end{figure*}

However for a $T \in S(c)$, if ${\cal R}_i$ is a zero matrix, then either all the $\beta $'s or $ \epsilon $'s corresponding to ${\cal R}_i$ in \eqref{Rsubi} are zeros. When all the $\beta$'s are zeros, the $\epsilon$'s can take any of the $2^n$ values possible and vice verse. Hence the number of possibilities for a zero matrix is $2^{n+1}-1$. Hence the total number of ($B, F $) possible for a $T$ matrix is $(2^{n+1}-1)^{\lambda }$, where $\lambda $, $0 \leq \lambda \leq c$ is the number of zero matrices among ${\cal R}_i$'s for $i=1,2,...,c.$  
\begin{theorem} A length $c$ is optimal for a linear index coding problem if and only if for all the matrices $T \in S(c)$ none of the  $c$ number of ${\cal R}_i$ matrices is a all-zero matrix. In other words, $\lambda=0$ for every $T$ in $S(c).$ 
\end{theorem}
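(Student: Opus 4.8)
The plan is to prove the equivalence by contraposition in both directions, exploiting the fact that a zero block ${\cal R}_j$ marks a transmission that can be deleted without destroying decodability, thereby shortening the code. Throughout I take $c$ to be a feasible length, i.e. $S(c)\neq\varnothing$, which is implicit in calling $c$ ``optimal''.

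The one computation I would record first is a bookkeeping identity. From the column structure of $A_{BC}$ (each block $A_i$ selecting the $i$-th coordinate) one gets $T_{BC}A_{BC}=\sum_{i=1}^{c}{\cal R}_i$, so the left-inverse requirement $TA=I_n$ (equivalently $M=I_n$ in \eqref{eqn:relationm}) becomes
\begin{equation}
\sum_{i=1}^{c}{\cal R}_i+T_{SI}A_{SI}=I_n. \label{eqn:sumR}
\end{equation}
Its significance is that the $i$-th transmission enters decoding only through ${\cal R}_i$, so a transmission with ${\cal R}_j=0$ contributes nothing to \eqref{eqn:sumR}.

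For the ``only if'' direction I would assume some $T\in S(c)$ has $\lambda>0$, say ${\cal R}_j=0$, and show $c$ is not optimal. By \eqref{Rsubi}, ${\cal R}_j=0$ forces the corresponding $\beta$- or $\epsilon$-vector to vanish, so either $g_j\equiv 0$ or no receiver uses $g_j$. Deleting from $T_{BC}$ the $n$ columns that constitute ${\cal R}_j$ produces a matrix $T'$ for the length-$(c-1)$ problem whose surviving blocks are exactly the old ${\cal R}_i$ ($i\neq j$), with $T'_{SI}=T_{SI}$ and $A'_{SI}=A_{SI}$ unchanged (the side-information data is independent of $c$). Then \eqref{eqn:sumR} still reads $I_n$ with $c-1$ summands, so $T'$ is a left inverse of the length-$(c-1)$ matrix $A$, and since each surviving block is zero or rank one the preceding Lemma gives $T'\in S(c-1)$; hence $S(c-1)\neq\varnothing$ and $c$ is not optimal.

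For the ``if'' direction I would take $c$ feasible but non-optimal, so the optimal length satisfies $c_{opt}<c$, pick any length-$c_{opt}$ solution, and append $c-c_{opt}$ zero transmissions (set the new columns of $T_{BC}$, equivalently the new $\beta$'s, to zero). The appended blocks are zero matrices, so \eqref{eqn:sumR} is unchanged and the preceding Lemma certifies the resulting $T\in S(c)$; but $\lambda(T)=c-c_{opt}>0$, contradicting the hypothesis that every $T\in S(c)$ has $\lambda=0$. Combining the two contrapositives yields the theorem. I expect the main obstacle to be the two structure-preservation checks — that deleting a zero ${\cal R}_j$ (resp. appending a zero block) genuinely lands in $S(c-1)$ (resp. $S(c)$) rather than merely in $S'(\cdot)$ — and these reduce, via the preceding Lemma and the block-diagonal form \eqref{eq:F}, to the remark that $A_{SI}$ and $B_{SI}=T_{SI}$ do not depend on the number of broadcast transmissions. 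The feasibility hypothesis $S(c)\neq\varnothing$ is the secondary subtlety, since otherwise ``$\lambda=0$ for all $T\in S(c)$'' holds vacuously for every $c<c_{opt}$.
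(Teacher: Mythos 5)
Your proposal is correct, and its skeleton is the same as the paper's: in one direction a zero block ${\cal R}_j$ marks a deletable transmission, and in the other a shorter optimal solution is padded with dead transmissions to produce an element of $S(c)$ with $\lambda>0$. The difference is one of rigor rather than of route, and it is worth recording. The paper's ``only if'' argument is operational --- since ${\cal R}_j=0$ forces all the corresponding $\beta$'s or all the corresponding $\epsilon$'s to vanish, ``a transmission can be removed'' --- but it never verifies that the truncated scheme actually lies in $S(c-1)$, which is what non-optimality of $c$ requires. You close exactly that gap: the identity $T_{BC}A_{BC}=\sum_{i=1}^{c}{\cal R}_i$ (implicit in the paper's column bookkeeping but never stated) shows that deleting the $n$ columns of a zero block leaves a left inverse of the length-$(c-1)$ matrix $A$ with $T_{SI}$ untouched, and Lemma 1 then certifies membership in $S(c-1)$ because the surviving blocks are still all-zero or rank one. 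In the ``if'' direction the two proofs coincide up to a dual choice of padding: the paper appends possibly nonzero transmissions that no receiver uses (zero $\epsilon$'s), while you append zero transmissions (zero $\beta$'s); either way the appended blocks vanish, the padded matrix lands in $S(c)$, and the contradiction with $\lambda=0$ is identical. Your explicit handling of the feasibility caveat --- that $S(c)\neq\varnothing$ must be assumed, since otherwise the condition holds vacuously for every length below the optimum --- is also a point the paper leaves unsaid.
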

\begin{proof} 
Proof for 'only if' part: We need to prove that if there exists a $T \in S(c) $ whose $\lambda \neq 0$ for a particular length $c$, then $c$ is not the optimal transmission length. Since $\lambda \neq 0,$ let  $1 \leq i \leq c$ be such that ${\cal R}_i$ is the all-zero matrix. Then we have  either all the $\beta$'s or all the  $ \epsilon $'s corresponding to ${\cal R}_i$ are zeros. If all the $ \epsilon $ are zeroes, that means that  one particular transmission is not  used by any of the receivers. Else, if all the $\beta$'s corresponding are  zeroes, then no message is transmitted in one particular transmission. In either case,  we can remove at least one transmission. Hence $c$ is not the optimal length. \\
\noindent The proof for 'if part' is by contradiction. Assume that a length $c$ exists such that it is feasible but not optimal and all the matrices in $S(c)$ have $\lambda=0$. Assume further that $c'= c-r $ for some $r > 0$, is  the optimal length. Then take one feasible solution with length $c'$ and add  extra $nr$ rows to the corresponding $F_{B}$ matrix and some extra $nc$ all zero columns to $B_{B}$. Let us call the new matrices $F'_{B}$ and $B'_{B}$. Let $g'_{i}, i=1,\ldots c$ be the set of broadcast messages given by $F'_{B}$ and $g_{i}$ be those which are given by $F_{B}$. One can observe that $\lbrace g'_{1},g'_{2},\ldots, g'_{c}\rbrace$ is nothing but $\lbrace g_{1},g_{2},\ldots,g_{c'}\rbrace$ plus some additional information. Hence when one sends  $\lbrace g'_{1},g'_{2},\ldots, g'_{c}\rbrace$, the receivers get whatever they would have got if $\lbrace g_{1},g_{2},\ldots,g_{c'}\rbrace$ was sent. Hence even if they do not use the extra transmissions given by $F'_{B}$, they will be able to decode their wanted messages. Hence the product of $F'_{B}$ and $B'_{B}$ matrices should belong to $S(c)$ (as it is a feasible index code) and has $\lambda \neq 0$, which is a contradiction.  Hence $c$ is the optimal length.
\end{proof}

Theorem 1 is illustrated in Example-1 (continued) and Example-2 below.

\textbf{Example 1.} \textit{(continued)}. We  illustrate Theorem 1 for the problem in Example 1. We will prove $c=1$ is not possible in this case. With $n=3$ and $c=1,$ from \eqref{23} there can be at most are $2^{12}$ matrices in $S(c).$ The matrix $B_{SI}$ is of the form 
$\left[\begin{array}{ccc}       
x & 0 & 0 \\
0 & x & 0 \\
0 & 0 & x \\
\end{array}   
\right]$   
where $x$ can be either $0$ or $1.$  From \eqref{sizeofSdash}, we have $2^3=8 $ matrices that belong to $S^\prime(1).$ We found these by brute force among $2^{12}$ matrices which has zeros at places which are occupied by zeros strictly in the corresponding $B_{SI}$. These eight matrices are  given below.
 \[
\left[ \begin{array}{ccc|ccc}
1  & 1 & 0 &1 &0 &0 \\
0   &1  &0  &0 &0 &0 \\
0  &0  & 1 &0 &0 & 0
 \end{array} 
\right],
\left[ \begin{array}{ccc|ccc}
 1 &1  & 0 &1 &0 &0 \\
0   &1  &0  &0 &0 &0 \\
1  & 0 & 1 &0 &0 & 1
 \end{array} 
\right]
\]
 \[
\left[ \begin{array}{ccc|ccc}
1  & 1 & 0 &1 &0 &0 \\
0   &1  &1  &0 &1 &0 \\
0  &0  & 1 &0 &0 &0 
 \end{array} 
\right],
\left[ \begin{array}{ccc|ccc}
1  & 1 & 0 &1 &0 &0 \\
0   &1  &1  &0 &1 &0 \\
1  & 0 & 1 &0 &0 & 1
 \end{array} 
\right]
\]
 \[
\left[ \begin{array}{ccc|ccc}
1  &0  &0  &0 &0 & 0\\
0   &1  &0  &0 &0 &0 \\
0  &0  &1  &0 &0 & 0
 \end{array} 
\right],
\left[ \begin{array}{ccc|ccc}
1  &0  &0  &0 &0 &0 \\
0   &1  &0  &0 &0 &0 \\
1  &0  &1  &0 &0 & 1
 \end{array} 
\right]
\]
 \[
\left[ \begin{array}{ccc|ccc}
1  &0  &0  &0 &0 &0 \\
0   & 1 &1  &0 &1 &0 \\
0  &0  &1  &0 &0 & 0
 \end{array} 
\right],
\left[ \begin{array}{ccc|ccc}
1  &0  &0  &0 &0 &0 \\
0   &1  &1  &0 &1 &0 \\
1  & 0 &1  &0 &0 & 1
 \end{array} 
\right]
\]
\noindent Since $c=1$ there is only one ${\cal R}_1$ which is the $3 \times 3$ submatrix consisting of the first 3 columns in each of the 8 matrices above. Clearly none of these matrices have rank one. Hence, there does not exist a solution with $c=1$.
\begin{example}
\label{ex2}
Let $m=n=3$ and  $ R_{i} $ wants  $ x_{i} $, $\forall i \in \lbrace{1,2,3}\rbrace$. $ R_{1} $ knows $ x_{2} $ and $ x_{3} $. $ R_{2} $ knows $ x_{3} $. $ R_{3} $ knows $ x_{1} $. We will show that the optimal value of $c$ is $2$. For $c=1$, size of $S^\prime(c)=16$ (from  \eqref{sizeofSdash}. The 16 matrices  which belong to $S^\prime(1)$ have been found by brute force among $2^{13}$ matrices which has zeros at places, which are to be occupied strictly by zeros in the corresponding $B_{SI}=
\left[ \begin{array}{cccc}
x & x & 0 & 0 \\
0 & 0 & x & 0 \\
0 & 0 & 0 & x
\end{array}
\right]
$ where $x$ stands for either $0$ or $1.$ These 16 matrices are shown below.
\[
\left[ \begin{array}{ccc|cccc}
1 &0 &0 &0 &0 &0 &0 \\
0 &1 &0 &0 &0 &0 &0 \\
0 &0 &1 &0 &0 &0 &0
 \end{array} 
\right],
\left[ \begin{array}{ccc|cccc}
1 &0 &1 &0 &1 &0 &0\\
0 &1 &0 &0 &0 &0 &0\\
0 &0 &1 &0 &0 &0 &0
 \end{array} 
\right]
\]
\[
\left[ \begin{array}{ccc|cccc}
1 &1 &0 &1 &0 &0 &0 \\
0 &1 &0 &0 &0 &0 &0 \\
0 &0 &1 &0 &0 &0 &0
 \end{array} 
\right],
\left[ \begin{array}{ccc|cccc}
1 &1 &1 &1 &0 &0 &0\\
0 &1 &0 &0 &0 &0 &0\\
0 &0 &1 &0 &0 &0 &0
 \end{array} 
\right]
\]
\[
\left[ \begin{array}{ccc|cccc}
1 &0 &0 &0 &0 &0 &0 \\
1 &1 &1 &0 &0 &0 &0 \\
1 &0 &1 &0 &0 &0 &1
 \end{array} 
\right],
\left[ \begin{array}{ccc|cccc}
1 &1 &0 &1 &0 &0 &0 \\
1 &1 &1 &0 &0 &1 &0 \\
1 &0 &1 &0 &0 &0 &1
 \end{array} 
\right]
\]
\[
\left[ \begin{array}{ccc|cccc}
1 &0 &1 &0 &1 &0 &0 \\
1 &1 &1 &0 &0 &1 &0 \\
1 &0 &1 &0 &0 &0 &1
 \end{array} 
\right],
\left[ \begin{array}{ccc|cccc}
1 &1 &1 &1 &1 &0 &0\\
1 &1 &1 &0 &0 &1 &0\\
1 &0 &1 &0 &0 &0 &1
 \end{array} 
\right]
\]
\[
\left[ \begin{array}{ccc|cccc}
1 &0 &0 &0 &0 &0 &0 \\
0 &1 &1 &0 &0 &1 &0 \\
0 &0 &1 &0 &0 &0 &0
 \end{array} 
\right],
\left[ \begin{array}{ccc|cccc}
1 &1 &0 &1 &0 &0 &0 \\
0 &1 &1 &0 &0 &1 &0 \\
0 &0 &1 &0 &0 &0 &0
 \end{array} 
\right]
\]
\[
\left[ \begin{array}{ccc|cccc}
1 &0 &1 &0 &1 &0 &0 \\
0 &1 &1 &0 &0 &1 &0 \\
0 &0 &1 &0 &0 &0 &0
 \end{array} 
\right],
\left[ \begin{array}{ccc|cccc}
1 &1 &1 &1 &1 &0 &0 \\
0 &1 &1 &0 &0 &1 &0 \\
0 &0 &1 &0 &0 &0 &0
 \end{array} 
\right]
\]
\[
\left[ \begin{array}{ccc|cccc}
1 &0 &0 &0 &0 &0 &0 \\
0 &1 &0 &0 &0 &0 &0 \\
1 &0 &1 &0 &0 &0 &1
 \end{array} 
\right],
\left[ \begin{array}{ccc|cccc}
1 &1 &0 &1 &0 &0 &0 \\
0 &1 &0 &0 &0 &0 &0 \\
1 &0 &1 &0 &0 &0 &1
 \end{array} 
\right]
\]
\[
\left[ \begin{array}{ccc|cccc}
1 &0 &1 &0 &1 &0 &0 \\
0 &1 &0 &0 &0 &0 &0 \\
1 &0 &1 &0 &0 &0 &1
 \end{array} 
\right],
\left[ \begin{array}{ccc|cccc}
1 &1 &1 &1 &1 &0 &0 \\
0 &1 &0 &0 &0 &0 &0 \\
1 &0 &1 &0 &0 &0 &1
 \end{array} 
\right]
\]
Since $c=1,$ there is only one ${\cal R}_1$ for each of the 16 matrices above and this ${\cal R}_1$ is the $3 \times 3$ submatrix consisting of the first 3 columns. It is easily seen that all these matrices have rank more than 1. Hence $c =1$ is not a feasible length for this case. For the case $c=3,$ the following  matrix $T$  belongs to the set $S(3).$
  \begin{eqnarray}
T=
\left[ \begin{array}{ccccccccc|cccc}
1 &0 &0 &1 &0 &0 &0 &0 &0 &1 &0 &0 &0 \\
0 &0 &0 &0 &1 &0 &0 &1 &0 &0 &1 &0 &0 \\
1 &0 &0 &1 &1 &0 &0 &1 &0 &0 &0 &0 &1 
  \end{array} \right]
  \label{ls}
  \end{eqnarray}
For this matrix $T,$ we have 
$$ 
\small{ 
{\cal R}_1 
\hspace{-0.1cm}
= 
\hspace{-0.1cm}
\left[ \begin{array}{ccc}
1 & 1 & 0 \\
0 & 0 & 0 \\
1 & 1 & 0
\end{array} \right],   
{\cal R}_2 
\hspace{-0.1cm}
=
\hspace{-0.1cm}
 \left[ \begin{array}{ccc}
0 & 0 & 0 \\
0 & 1 & 1 \\
0 & 1 & 1
\end{array} \right],   
{\cal R}_3 
\hspace{-0.1cm}
=
\hspace{-0.1cm}
 \left[ \begin{array}{ccc}
0 & 0 & 0 \\
0 & 0 & 0 \\
0 & 0 & 0
\end{array} \right]. 
}
$$
We see that  $\lambda =1$ due to the all-zero matrix ${\cal R}_3$ and also the rank of the other two matrices is one. Hence $c=3$ is not optimal. Therefore, $c=2 $ should be the optimal length. 
\end{example}
\section{Minimum Number of Codes Possible for an Optimal $c$}
\label{sec5}
  In this subsection, we  find a lower bound on the number of linear codes  which are optimal in terms of length or equivalently in terms of bandwidth, for a single unicast index coding problem. For the optimal $c$, the number of  matrices which are left inverses of $A$ and is a product of  some $B$ and $F$ gives the number of codes possible with that length, which is also the size of  the set $S(c)$. But for any $T \in S(c)$, 
\begin{equation}
T_{BC}A_{BC}=B_{BC}F_{BC}A_{BC} =  I-T_{SI} A_{SI}=I-B_{SI}A_{SI}
\label{eqn:minnum1}
\end{equation}\\
which is equal to (\ref{eqn:minnum2}).
\begin{figure*}
 \hrule
\begin{eqnarray}
\left( \begin{array}{cccccc}
\epsilon _{l'_{1},R_{1}} & \epsilon _{l'_{2},R_{1}} &\ldots &\epsilon _{l'_{c},R_{1}} \\
\epsilon _{l'_{1},R_{2}} & \epsilon _{l'_{2},R_{2}} &\ldots &\epsilon _{l'_{c},R_{2}} \\
 . \\
 .\\
 .\\
\epsilon _{l'_{1},R_{n}} & \epsilon _{l'_{2},R_{n}} &\ldots&\epsilon _{l'_{c},R_{n}}
\end{array} \right)
 \left( \begin{array}{cccccccc}
\beta_{X_{1},l_{1}} & \beta_{X_{2},l_{1}} &\ldots &\beta_{X_{n},l_{1}}\\
\beta_{X_{1},l_{2}} & \beta_{X_{2},l_{2}}  &\ldots &\beta_{X_{n},l_{2}}\\
 . \\
 .\\
 .\\
\beta_{X_{1},l_{c}} & \beta_{X_{2},l_{c}} &\ldots &\beta_{X_{n},l_{c}}
\end{array} \right)  
 \label{eqn:minnum2}
\end{eqnarray}
\hrule
\end{figure*}
\begin{theorem}
\label{thm2} 
The number of linear index coding solutions having optimal length $c$ for a single unicast IC problem is at least
 \begin{equation}
  \frac{1}{c!} \prod_{i=0}^{c-1}{(2^{c}-2^{i})}
  \label{eqn:minnum4}
  \end{equation}
\end{theorem}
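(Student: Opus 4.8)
The plan is to work directly with the factorization equation that an optimal solution must satisfy, and to count how many distinct codes can be built on top of a single fixed feasible configuration of the side information. Combining \eqref{eqn:minnum1} and \eqref{eqn:minnum2}, any $T\in S(c)$ yields
\[
\mathcal{E}\,\mathcal{B}=I-B_{SI}A_{SI}=:D,
\]
where $\mathcal{E}=[\epsilon_{l'_i,R_j}]$ is the $n\times c$ matrix of decoding coefficients, $\mathcal{B}=[\beta_{X_k,l_i}]$ is the $c\times n$ matrix whose $i$-th row is the $i$-th transmitted symbol $g_i$, and $D$ has all-ones diagonal with off-diagonal support confined to the side-information positions. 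A code is exactly the set of rows $\{g_1,\dots,g_c\}$ of $\mathcal{B}$. First I would fix one feasible $D^\ast=I-B_{SI}A_{SI}$ coming from some optimal length-$c$ solution and count the distinct codes $\mathcal{B}$ that realize this same $D^\ast$; since each such $\mathcal{B}$ is a genuine optimal solution, this count is a lower bound on the total number of optimal codes.

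The key structural fact I would establish is $\mathrm{rank}(D^\ast)=c$. Because $c$ is optimal, the $c$ transmitted symbols of any optimal code must be linearly independent: otherwise a basis of their span would give a feasible code of length strictly less than $c$, contradicting optimality (this is the same ``remove a transmission'' argument used in the proof of Theorem~1). Hence $\mathrm{rank}(\mathcal{B})=c$, and since $D^\ast=\mathcal{E}\mathcal{B}$ has inner dimension $c$ we get $\mathrm{rank}(D^\ast)\le c$; conversely, a feasible $D$ of rank $d<c$ would factor through $d$ transmissions and again contradict optimality, so $\mathrm{rank}(D^\ast)=c$. Thus $\mathcal{V}:=\mathrm{rowspace}(D^\ast)$ is a fixed $c$-dimensional subspace of $\mathbb{F}_2^{\,n}$.

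Next I would count. A matrix $\mathcal{B}$ realizes $D^\ast$, i.e.\ $D^\ast=\mathcal{E}\mathcal{B}$ for some $\mathcal{E}$, if and only if every row of $D^\ast$ lies in $\mathrm{rowspace}(\mathcal{B})$; restricting to optimal codes, where $\mathrm{rank}(\mathcal{B})=c$, this forces $\mathrm{rowspace}(\mathcal{B})=\mathcal{V}$. Conversely any $\mathcal{B}$ whose rows form a basis of $\mathcal{V}$ admits a unique $\mathcal{E}$ with $\mathcal{E}\mathcal{B}=D^\ast$ (uniqueness because $\mathcal{B}$ has full row rank), yielding a valid pair $(B,F)$ and hence a valid optimal code. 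Therefore the optimal codes realizing $D^\ast$ are exactly the unordered bases of $\mathcal{V}$. The number of ordered bases of a $c$-dimensional space over $\mathbb{F}_2$ is $\prod_{i=0}^{c-1}(2^c-2^i)=|\mathrm{GL}_c(\mathbb{F}_2)|$, and since the $c$ basis vectors are distinct each code corresponds to exactly $c!$ orderings; dividing gives $\frac{1}{c!}\prod_{i=0}^{c-1}(2^c-2^i)$ distinct optimal codes, which establishes the bound.

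I expect the main obstacle to be the clean justification of $\mathrm{rank}(D^\ast)=c$ together with the equivalence ``$\mathcal{B}$ realizes $D^\ast$ $\iff$ $\mathrm{rowspace}(\mathcal{B})=\mathcal{V}$'', since this is what converts the problem into counting bases of a single fixed subspace; one must check carefully that fixing $D^\ast$ (equivalently fixing $B_{SI}$) really does leave the decoding matrix $\mathcal{E}$ freely solvable for every new transmitted basis. The division by $c!$ needs the remark that linearly independent vectors have trivial permutation stabilizer, so no overcounting occurs. Finally I would note that the result is only a lower bound, because other feasible matrices $D$ (arising from different uses of the side information) may contribute further distinct codes whose row spaces differ from $\mathcal{V}$.
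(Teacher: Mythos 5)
Your proposal is correct and follows essentially the same route as the paper's proof: fix one feasible $B_{SI}$ so that $D = I - B_{SI}A_{SI}$ has rank $c$, count the matrices whose row space equals $\mathrm{rowspace}(D)$ (i.e., the $\prod_{i=0}^{c-1}(2^{c}-2^{i})$ ordered bases), divide by $c!$ for unordered codes, and observe that other choices of $B_{SI}$ with distinct row spaces can only add codes, making this a lower bound. Your write-up is in fact slightly more careful than the paper's in two spots — you explicitly justify that $\mathrm{rank}(D^{\ast})=c$ in both directions and that each basis $\mathcal{B}$ admits a (unique) decoding matrix $\mathcal{E}$ — whereas the paper asserts the existence of a rank-$c$ choice of $B_{SI}$ from optimality without spelling out the converse direction.
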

\begin{proof}:
 Consider (\ref{eqn:minnum1}) and (\ref{eqn:minnum2}). Here if both RHS of (\ref{eqn:minnum1})  and the second matrix in (\ref{eqn:minnum2}) are fixed, a solution which is the first matrix in (\ref{eqn:minnum2})  will exist only if the row space of RHS of (\ref{eqn:minnum1}) is spanned by the rows of the second matrix in  (\ref{eqn:minnum2}). But the rank of the second matrix in (\ref{eqn:minnum2}) is at most $c$. Hence this is possible only if the rank of the RHS matrix in  (\ref{eqn:minnum1}) is less than or equal to $c$. The number of possible  $ B_{SI}$ matrices is $2^{{\overset{n}{\underset{i=1}{\sum}}} \mid  K_{i}\mid}$. Since  $c$ is the optimal length, there should be at least one $ B_{SI}$ such that RHS of (\ref{eqn:minnum1}) is of rank $c$. For any such RHS of (\ref{eqn:minnum1}), we can take the second matrix in (\ref{eqn:minnum2}) in 

{\small 
$$(2^{c}-1){\overset{c-1}{\underset{i=1}{\prod}}} \left( 2^{c}-1 -\dbinom{i}{1}-\dbinom{i}{2}.......-\dbinom{i}{i}\right)= \prod_{i=0}^{c-1} (2^c-2^i) $$
}
\noindent
ways such that the row spaces of both the matrices are same. Each such matrix is an index code, which is feasible, and each row of the  matrix represents a transmission. As order of transmissions does not matter, we need to neglect those matrices which are row-permuted versions of one another. Hence, total number of distinct transmission schemes possible is  
${\frac{1}{c!}    {\overset{c-1}{\underset{i=0}{\prod}}{(2^{c}-2^{i})} }}$. But there may be more than one $B_{SI}$ matrices which are of rank $c$ and whose row spaces are different. Hence the total number of index codes possible can be more than (\ref{eqn:minnum4})  as we take into account all possible basis sets of each of the different row spaces. (Example 3 is such a case.)  Hence (\ref{eqn:minnum4}) is a lower bound on the number of index codes possible.
\end{proof}

Note that all possible matrices occupying RHS of (\ref{eqn:minnum1}) are exactly the collection of matrices which fits the index coding problem as per the definition of a fitting matrix in \cite{YBJK}. Hence algebraically we have proved the already established result in \cite{YBJK} that the optimal length of a linear solution is the minimum among the ranks of all the matrices which fits the IC problem. However, we will see subsequently that the matrices $A_{SI}$ and $I-B_{SI}A_{SI}$ will be useful in finding the optimal length and the number of optimal linear ICs in some special cases much more easily than using fitting matrices.  Moreover, combining equations (\ref{gsubi}), (\ref{eqn:minnum1}) and (\ref{eqn:minnum2}), we see that the different bases of the $I-B_{SI}A_{SI}$ matrix can be used to obtain all possible linear optimal ICs. The elements of each such basis are precisely the codewords of the code determined by the chosen basis. This is illustrated in detail in Example \ref{ex3}.\\
\begin{example} 
\label{ex3} 
This example is to illustrate Theorem \ref{thm2}. Let $m=n=4$. $R_{i}$ wants $x_{i}$ and knows $x_{i+1}$, where $+$ is modulo-4 addition. Here all possible matrices of the form $(I-B_{SI}A_{SI})$ as given by (\ref{eqn:minnum3}), denoted by $L_{i}$, $i=1,\ldots,16$ are shown below. Only  $L_{5}$ satisfies the requirement given in the proof of Theorem \ref{thm2}. The set of all optimal index codes is given by the collection of all possible basis of the row space of this matrix. They are $28$ in number.  We list out those codes in Table \ref{tabex3}.

\[
L_{1} = \left[\begin{array}{ccccccc}
1 & 0 & 0 & 0 \\
 0& 1 &0 &0\\
 0& 0&1&0\\
 0&0&0&1
\end{array}\right], ~~   L_{2} = \left[\begin{array}{ccccccc}
 1 & 0 & 0 & 0 \\
 0& 1 &0 &0\\
 0& 0&1&0\\
 1&0&0&1
\end{array}\right], 
\]
\[
L_{3} = \left[\begin{array}{ccccccccc}
 1 & 1 & 0 & 0 \\
 0& 1 &0 &0\\
 0& 0&1&0\\
 1&0&0&1
\end{array}\right],   ~~ 
L_{4} = \left[\begin{array}{ccccccccc}
  1 & 1 & 0 & 0 \\
 0& 1 &1 &0\\
 0& 0&1&0\\
 1&0&0&1
\end{array}\right]
\]
\[
L_{5} = \left[\begin{array}{ccccccccc}
  1 & 1 & 0 & 0 \\
 0& 1 &1 &0\\
 0& 0&1&1\\
 1&0&0&1
\end{array}\right],   ~~ 
L_{6} = \left[\begin{array}{ccccccccc}
  1 & 0 & 0 & 0 \\
 0& 1 &1 &0\\
 0& 0&1&0\\
 1&0&0&1
 \end{array}\right],  
\]
\[
L_{7} = \left[\begin{array}{ccccccccc}
  1 & 0 & 0 & 0 \\
 0& 1 &1 &0\\
 0& 0&1&1\\
 1&0&0&1
 \end{array}\right],   ~~ 
L_{8} = \left[\begin{array}{ccccccccc}
  1 & 0 & 0 & 0 \\
 0& 1 &0 &0\\
 0& 0&1&1\\
 1&0&0&1
 \end{array}\right],
\]
\[
L_{9} = \left[\begin{array}{ccccccccc}
  1 & 1 & 0 & 0 \\
 0& 1 &0 &0\\
 0& 0&1&0\\
 0&0&0&1
 \end{array}\right],   ~~ 
L_{10} = \left[\begin{array}{ccccccccc}
  1 & 1 & 0 & 0 \\
 0& 1 &1 &0\\
 0& 0&1&0\\
 0&0&0&1
 \end{array}\right], 
\]
\[
L_{11} = \left[\begin{array}{ccccccccc}
  1 & 1 & 0 & 0 \\
 0& 1 &1 &0\\
 0& 0&1&1\\
 0&0&0&1
 \end{array}\right],   ~~ 
L_{12} = \left[\begin{array}{ccccccccc}
  1 & 1 & 0 & 0 \\
 0& 1 &0 &0\\
 0& 0&1&1\\
 0&0&0&1
 \end{array}\right],
\]
\[
L_{13} = \left[\begin{array}{ccccccccc}
  1 & 0 & 0 & 0 \\
 0& 1 &1 &0\\
 0& 0&1&0\\
 0&0&0&1
 \end{array}\right],   ~~ 
L_{14} = \left[\begin{array}{ccccccccc}
  1 & 0 & 0 & 0 \\
 0& 1 &1 &0\\
 0& 0&1&1\\
 0&0&0&1
 \end{array}\right], 
\]
\[
L_{15} = \left[\begin{array}{ccccccccc}
  1 & 0 & 0 & 0 \\
 0& 1 &0 &0\\
 0& 0&1&1\\
 0&0&0&1
 \end{array}\right],   ~~ 
L_{16} = \left[\begin{array}{ccccccccc}
  1 & 1 & 0 & 0 \\
 0& 1 &0 &0\\
 0& 0&1&1\\
 1&0&0&1
 \end{array}\right]
\]
  \end{example}

\begin{table}
\begin{center}
\small{
\begin{tabular}{|c|c|}
\hline
 {Code} &  {Encoding} \\
 \hline
$\mathfrak{C}_{1}$ & $x_{1}+x_{2},x_{2}+x_{3},x_{3}+x_{4}$   \\
\hline
$\mathfrak{C}_{2}$ & $x_{1}+x_{2},x_{2}+x_{3},x_2+x_4$   \\
\hline
$\mathfrak{C}_{3}$ & $x_{1}+x_{2},x_{2}+x_{3},x_1+x_2+x_3+x_4$  \\
\hline
$\mathfrak{C}_{4}$ & $x_{1}+x_{2},x_{2}+x_{3},x_1+x_4$   \\
\hline
$\mathfrak{C}_{5}$ & $x_{1}+x_{2},x_{3}+x_{4},x_1+x_3$   \\
\hline
$\mathfrak{C}_{6}$ & $x_{1}+x_{2},x_{3}+x_{4},x_2+x_4$   \\
\hline
$\mathfrak{C}_{7}$ & $x_{1}+x_{2},x_{3}+x_{4},x_1+x_4$   \\
\hline
$\mathfrak{C}_{8}$  & $x_{1}+x_{2},x_{1}+x_{3},x_2+x_4$  \\
\hline
$\mathfrak{C}_{9}$ & $x_{1}+x_{2}, x_{1}+x_{3},x_1+x_2+x_3+x_4$   \\
\hline
$\mathfrak{C}_{10}$ & $x_{1}+x_{2}, x_{1}+x_{3},x_1+x_4$  \\
\hline
$\mathfrak{C}_{11}$ & $x_{1}+x_{2},x_{2}+x_{4},x_1+x_2+x_3+x_4$   \\
\hline
$\mathfrak{C}_{12}$ & $x_{1}+x_{2},x_1+x_2+x_3+x_4,x_1+x_4$   \\
\hline
$\mathfrak{C}_{13}$  & $x_{2}+x_{3},x_3+x_4,x_1+x_3$  \\
\hline
$\mathfrak{C}_{14}$ & $x_{2}+x_{3},x_3+x_4,x_1+x_2+x_3+x_4$  \\
\hline
$\mathfrak{C}_{15}$ & $x_{2}+x_{3},x_3+x_4,x_1+x_4$  \\
\hline
$\mathfrak{C}_{16}$ & $x_{2}+x_{3},x_1+x_3,x_2+x_4$  \\
\hline
$\mathfrak{C}_{18}$ & $x_{2}+x_{3},x_1+x_3,x_1+x_4$   \\
\hline
$\mathfrak{C}_{19}$  & $x_{2}+x_{3},x_2+x_4,x_1+x_2+x_3+x_4$   \\
\hline
$\mathfrak{C}_{20}$  & $x_{2}+x_{3},x_2+x_4,x_1+x_4$  \\
\hline
$\mathfrak{C}_{21}$ & $x_{3}+x_{4},x_1+x_3,x_2+x_4$   \\
\hline
$\mathfrak{C}_{22}$  & $x_{3}+x_{4},x_1+x_3,x_1+x_2+x_3+x_4$  \\
\hline
$\mathfrak{C}_{23}$ & $x_{1}+x_{3},x_2+x_4,x_1+x_4$  \\
\hline
$\mathfrak{C}_{24}$ & $x_{1}+x_{3},x_1+x_2+x_3+x_4,x_1+x_4$   \\
\hline
$\mathfrak{C}_{25}$ & $x_{2}+x_{4},x_1+x_2+x_3+x_4,x_1+x_4$   \\
\hline
$\mathfrak{C}_{26}$ & $x_{3}+x_{4},x_2+x_4,x_1+x_2+x_3+x_4$  \\
\hline
$\mathfrak{C}_{27}$  & $x_{3}+x_{4},x_2+x_4,x_1+x_4$   \\
\hline
$\mathfrak{C}_{28}$  & $x_{3}+x_{4},x_1+x_2+x_3+x_4,x_1+x_4$   \\
\hline
\end{tabular}
\caption{\small{ All possible optimal linear solutions for Example \ref{ex3}}.}
\label{tabex3}
}
\end{center}
\end{table}

\begin{corollary} 
\label{cor1}
The number of index codes possible with the optimal length  $c$ for a single unicast IC problem is given by
\begin{equation}
\label{Cor1eq}
\frac{\mu}{c!} \prod_{i=0}^{c-1} (2^{c}-2^{i})
\end{equation}
where $\mu$ is the number of distinct row spaces of $c-$rank RHS matrix of (\ref{eqn:minnum1}) obtainable from all possible  choices  of  $B_{SI}$ matrices  out of the $2^{{\overset{n}{\underset{i=1}{\sum}}} \mid  K_{i}\mid}$ possible ones.
\begin{proof} 
The proof of this follows from that of Theorem \ref{thm2}.
\end{proof}
\end{corollary}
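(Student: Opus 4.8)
The plan is to refine the counting argument in the proof of Theorem~\ref{thm2}, which produced the lower bound, into an exact count by tracking \emph{all} admissible row spaces rather than a single one. Recall from that proof that, through the factorization in (\ref{eqn:minnum1})--(\ref{eqn:minnum2}), a linear index code of optimal length $c$ amounts to a choice of $B_{SI}$ together with the $c\times n$ matrix of transmitted symbols $g_1,\dots,g_c$ (the second factor in (\ref{eqn:minnum2})), subject to the requirement that the row space of this symbol matrix coincide with the row space of the fitting matrix $I-B_{SI}A_{SI}$ on the RHS of (\ref{eqn:minnum1}). Since the symbol matrix has only $c$ rows and $c$ is the optimal length, the choices of $B_{SI}$ that yield an optimal code are precisely those for which $I-B_{SI}A_{SI}$ has rank exactly $c$; each such choice determines a $c$-dimensional row space $V\subseteq\mathbb{F}_2^n$, and (because the product $\epsilon\,\beta$ in (\ref{eqn:minnum2}) has rank $c$) the transmitted symbols must be linearly independent and hence form a basis of $V$.

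First I would isolate the bijection for a single row space. For a fixed $c$-dimensional subspace $V$ that arises as the row space of some rank-$c$ RHS matrix, the optimal index codes whose transmissions span $V$ are exactly the \emph{unordered} bases of $V$. As already computed in the proof of Theorem~\ref{thm2}, the number of \emph{ordered} bases of a $c$-dimensional space over $\mathbb{F}_2$ is $\prod_{i=0}^{c-1}(2^c-2^i)$; since the $c$ basis vectors are distinct and the order of transmissions is immaterial, dividing by $c!$ gives exactly $\frac{1}{c!}\prod_{i=0}^{c-1}(2^c-2^i)$ codes attached to $V$.

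Next I would show that these per-subspace families partition the entire set of optimal codes. Every optimal code determines, as the linear span of its $c$ transmissions, a unique $c$-dimensional row space, so the map sending a code to this span is well defined. If two admissible subspaces $V_1\neq V_2$ shared a common code, that code would be a basis of both, forcing $V_1=V_2$; hence the families indexed by distinct admissible subspaces are pairwise disjoint. Letting $\mu$ be the number of distinct such row spaces obtainable as the rank-$c$ RHS of (\ref{eqn:minnum1}) over all $2^{\sum_{i=1}^{n}|K_i|}$ choices of $B_{SI}$, and summing the equal contributions of the $\mu$ classes, gives the total $\frac{\mu}{c!}\prod_{i=0}^{c-1}(2^c-2^i)$, which is the formula in (\ref{Cor1eq}).

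The step needing the most care is the disjointness-and-exhaustiveness bookkeeping. I must confirm that \emph{every} optimal code is captured by exactly one admissible subspace: exhaustiveness holds because any optimal code furnishes a feasible fitting matrix, hence a $B_{SI}$ realizing a rank-$c$ RHS whose row space is that code's span; and no code is double counted, neither across subspaces (handled by the span being a code invariant) nor within a subspace (handled by the $c!$ division together with the linear independence, hence distinctness, of the $c$ transmissions). Everything else is the same linear-algebra enumeration already carried out for Theorem~\ref{thm2}.
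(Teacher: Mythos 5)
Your proposal is correct and takes essentially the same route as the paper: the paper's one-line proof of Corollary \ref{cor1} simply means rerunning the counting argument of Theorem \ref{thm2} once per admissible row space, which is exactly what you do, with your per-subspace basis count, disjointness (a basis spans a unique subspace), and exhaustiveness (every optimal code's transmissions form a basis of some rank-$c$ matrix $I-B_{SI}A_{SI}$) supplying the bookkeeping the paper leaves implicit.
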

Note that $\mu=1$ for Example \ref{ex1} and Example \ref{ex2}. The following example is a case with $\mu =2.$
\begin{table}
\begin{center}
\scriptsize{
\begin{tabular}{|c|c|c|}
\hline
 {Code} &  {Encoding} & $t_{max}(T)$\\
 \hline
$\mathfrak{C}_{1}$ & $x_{1}+x_{2},x_{2}+x_{3},x_{3}+x_{4}$ &3 \\
\hline
$\mathfrak{C}_{2}$ & $x_{1}+x_{2},x_{2}+x_{3},x_2+x_4$ & 2\\
\hline
$\mathfrak{C}_{3}$ & $x_{1}+x_{2},x_{2}+x_{3},x_1+x_2+x_3+x_4$ &2  \\
\hline
$\mathfrak{C}_{4}$ & $x_{1}+x_{2},x_{2}+x_{3},x_1+x_4$ &3 \\
\hline
$\mathfrak{C}_{5}$ & $x_{1}+x_{2},x_{3}+x_{4},x_1+x_3$ & 2  \\
\hline
$\mathfrak{C}_{6}$ & $x_{1}+x_{2},x_{3}+x_{4},x_2+x_4$ &2 \\
\hline
$\mathfrak{C}_{7}$ & $x_{1}+x_{2},x_{3}+x_{4},x_1+x_4$ &3\\
\hline
$\mathfrak{C}_{8}$  & $x_{1}+x_{2},x_{1}+x_{3},x_2+x_4$ &3 \\
\hline
$\mathfrak{C}_{9}$ & $x_{1}+x_{2}, x_{1}+x_{3},x_1+x_2+x_3+x_4$ &3 \\
\hline
$\mathfrak{C}_{10}$ & $x_{1}+x_{2}, x_{1}+x_{3},x_1+x_4$ &2  \\
\hline
$\mathfrak{C}_{11}$ & $x_{1}+x_{2},x_{2}+x_{4},x_1+x_2+x_3+x_4$ &3 \\
\hline
$\mathfrak{C}_{12}$ & $x_{1}+x_{2},x_1+x_2+x_3+x_4,x_1+x_4$ &2  \\
\hline
$\mathfrak{C}_{13}$  & $x_{2}+x_{3},x_3+x_4,x_1+x_3$ &2 \\
\hline
$\mathfrak{C}_{14}$ & $x_{2}+x_{3},x_3+x_4,x_1+x_2+x_3+x_4$ &2 \\
\hline
$\mathfrak{C}_{15}$ & $x_{2}+x_{3},x_3+x_4,x_1+x_4$ &3 \\
\hline
$\mathfrak{C}_{16}$ & $x_{2}+x_{3},x_1+x_3,x_2+x_4$ &3 \\
\hline
$\mathfrak{C}_{17}$ & $x_{2}+x_{3},x_1+x_3,x_1+x_2+x_3+x_4$ &3 \\
\hline
$\mathfrak{C}_{18}$ & $x_{2}+x_{3},x_1+x_3,x_1+x_4$ &2 \\
\hline
$\mathfrak{C}_{19}$  & $x_{2}+x_{3},x_2+x_4,x_1+x_2+x_3+x_4$ &3 \\
\hline
$\mathfrak{C}_{20}$  & $x_{2}+x_{3},x_2+x_4,x_1+x_4$ &2\\
\hline
$\mathfrak{C}_{21}$ & $x_{3}+x_{4},x_1+x_3,x_2+x_4$ &3 \\
\hline
$\mathfrak{C}_{22}$  & $x_{3}+x_{4},x_1+x_3,x_1+x_2+x_3+x_4$ &3 \\
\hline
$\mathfrak{C}_{23}$ & $x_{1}+x_{3},x_2+x_4,x_1+x_4$ &3 \\
\hline
$\mathfrak{C}_{24}$ & $x_{1}+x_{3},x_1+x_2+x_3+x_4,x_1+x_4$ &3 \\
\hline
$\mathfrak{C}_{25}$ & $x_{2}+x_{4},x_1+x_2+x_3+x_4,x_1+x_4$ &3 \\
\hline
$\mathfrak{C}_{26}$ & $x_{3}+x_{4},x_2+x_4,x_1+x_2+x_3+x_4$ &3 \\
\hline
$\mathfrak{C}_{27}$  & $x_{3}+x_{4},x_2+x_4,x_1+x_4$ &2 \\
\hline
$\mathfrak{C}_{28}$  & $x_{3}+x_{4},x_1+x_2+x_3+x_4,x_1+x_4$ &2 \\
\hline
\end{tabular}
}
\end{center}
\caption{\small Optimal linear solutions corresponding to $B_{SI,1}$ for Example \ref{ex4}.}
\label{tab1}
\end{table}

\begin{table}
\begin{center}
\scriptsize{
\begin{tabular}{|c|c|c|}
\hline
 {Code} &  {Encoding} & $t_{max}(T)$\\
 \hline
$\mathfrak{C}_{29}$  & $x_{3}+x_{2},x_2+x_1,x_1+x_4+x_3$ &3 \\
\hline
$\mathfrak{C}_{30}$  & $x_{3}+x_{2},x_2+x_1,x_4$ &2 \\
\hline
$\mathfrak{C}_{31}$  & $x_{3}+x_{2},x_2+x_1,x_1+x_4+x_2$ &2 \\
\hline
$\mathfrak{C}_{32}$  & $x_{1}+x_{2},x_2+x_3,x_2+x_4+x_3$ &2 \\
\hline
$\mathfrak{C}_{33}$  & $x_{1}+x_{2},x_4,x_1+x_4+x_3$ &3 \\
\hline
$\mathfrak{C}_{34}$  & $x_{1}+x_{2},x_1+x_3,x_1+x_4+x_3$ &2 \\
\hline
$\mathfrak{C}_{35}$  & $x_{1}+x_{2},x_2+x_4+x_1,x_1+x_4+x_3$ &3 \\
\hline
$\mathfrak{C}_{36}$  & $x_{1}+x_{2},x_1+x_3,x_4 $&2 \\
\hline
$\mathfrak{C}_{37}$  & $x_{1}+x_{2},x_4,x_2+x_4+x_3$ &3 \\
\hline
$\mathfrak{C}_{38}$  & $x_{1}+x_{2},x_1+x_3,x_1+x_4+x_2$ &2 \\
\hline
$\mathfrak{C}_{39}$  & $x_{1}+x_{2},x_1+x_3,x_2+x_4+x_3$ &3 \\
\hline
$\mathfrak{C}_{40}$  & $x_{1}+x_{2},x_2+x_3+x_4,x_1+x_4+x_2$ &2 \\
\hline
$\mathfrak{C}_{41}$  & $x_{3}+x_{2},x_4,x_1+x_4+x_3$ &3 \\
\hline
$\mathfrak{C}_{42}$  & $x_{3}+x_{2},x_1+x_3,x_1+x_4+x_3$ &2 \\
\hline
$\mathfrak{C}_{43}$  & $x_{3}+x_{2},x_2+x_3+x_4,x_1+x_4+x_3$ &3 \\
\hline
$\mathfrak{C}_{44}$  & $x_{3}+x_{2},x_1+x_3,x_4$ &2 \\
\hline
$\mathfrak{C}_{45}$  & $x_{3}+x_{2},x_4,x_1+x_4+x_2$ &3 \\
\hline
$\mathfrak{C}_{46}$  & $x_{3}+x_{2},x_1+x_3,x_1+x_4+x_2$ &3 \\
\hline
$\mathfrak{C}_{47}$  & $x_{3}+x_{2},x_1+x_3,x_2+x_4+x_3$ &2 \\
\hline
$\mathfrak{C}_{48}$  & $x_{3}+x_{2},x_2+x_4+x_1,x_2+x_4+x_3$ &3 \\
\hline
$\mathfrak{C}_{49}$  & $x_{1}+x_{2}+x_4,x_4,x_1+x_4+x_3$ &2 \\
\hline
$\mathfrak{C}_{50}$  & $x_{3}+x_{2}+x_4,x_4,x_1+x_4+x_3$ &2 \\
\hline
$\mathfrak{C}_{51}$  & $x_{3}+x_{1},x_2+x_4+x_1,x_1+x_4+x_3$ &3 \\
\hline
$\mathfrak{C}_{52}$  & $x_{3}+x_{2}+x_4,x_2+x_1+x_4,x_1+x_4+x_3$ &3 \\
\hline
$\mathfrak{C}_{53}$  & $x_{3}+x_{1},x_4,x_1+x_4+x_2$ &3 \\
\hline
$\mathfrak{C}_{54}$  & $x_{3}+x_{1},x_4,x_2+x_4+x_3$ &3 \\
\hline
$\mathfrak{C}_{55}$  & $x_4,x_1+x_2+x_4,x_1+x_4+x_3$ &2 \\
\hline
$\mathfrak{C}_{56}$  & $x_{3}+x_{1},x_2+x_3+x_4,x_1+x_4+x_3$ &3 \\

\hline
\end{tabular}
}
\end{center}
\caption{\small Optimal linear solutions corresponding to $B_{SI,2}$ for Example \ref{ex4}.}
\label{tab2}
\end{table}

\begin{example} 
\label{ex4}
This example illustrates Corollary \ref{cor1}. Let $m=n=4$. $R_{i}$ wants $x_{i}$ and knows $x_{i+1}$ where $+$ is modulo-4 operation. $x_{3}$ knows $x_{1}$ also. The optimal length is $c=3$ and it can be checked that  $\mu=2$ and the corresponding $B_{SI}$ matrices are
$B_{SI,1}=
\left[ \begin{array}{ccccc}
1 & 0 & 0 & 0 &0\\
0 & 1& 0 & 0 & 0 \\
0 & 0 & 0 & 1&0\\
0&0&0&0&1
\end{array}
\right]
$ and $B_{SI,2}=
\left[ \begin{array}{ccccc}
1 & 0 & 0 & 0 &0\\
0 & 1& 0 & 0 & 0 \\
0 & 0 & 1 & 1&0\\
0&0&0&0&0
\end{array}
\right]
$. These have been obtained from the general form of $B_{SI}=
\left[ \begin{array}{ccccc}
x & 0 & 0 & 0 &0\\
0 & x& 0 & 0 & 0 \\
0 & 0 & x & x&0\\
0&0&0&0&x
\end{array}
\right]
$ where $x$ can take the values $0$ or $1.$ We have 
$$
A_{SI}=\left[ \begin{array}{cccc}
0 & 1 & 0 & 0 \\
0 & 0& 1 & 0  \\
0 & 0 & 0 & 1\\
1 & 0 & 0 & 0 \\
1 & 0 & 0 & 0
\end{array}
\right]
$$ 
and 
$$
I-B_{SI,1}A_{SI}=\left[ \begin{array}{cccc}
1 & 1 & 0 & 0 \\
0 & 1& 1 & 0  \\
1 & 0 & 1 & 0\\
1 & 0 & 0 & 1 
\end{array}
\right]
$$
whose rank is 3. There are 28 different bases for $I-B_{SI,1}A_{SI}$ one of them being 
$
\left[ \begin{array}{cccc}
1 & 1 & 0 & 0 \\
0 & 1& 1 & 0  \\
0 & 0 & 1 & 1 
\end{array}
\right]
$
which corresponds to the code $\{x_1+x_2,~~ x_2+x_3,~~ x_3+x_4 \}$ consisting of codewords one corresponding to each row of the basis. 

The total number of optimal linear codes are 56 in number, 28 corresponding to each $B_{SI,1}$ and $B_{SI,2}.$ All these 56 codes are listed in Table \ref{tab1} and Table \ref{tab2}. The last column in these tables is described and used in the following section.
\label{mu}
 \end{example}
\begin{corollary}
\label{cor2}
The bound in Theorem 2 is satisfied with equality for  single unicast single uniprior IC problems.
\end{corollary}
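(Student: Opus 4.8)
The plan is to exploit the rigid structure that the single uniprior and single unicast hypotheses force on $A_{SI}$ and $B_{SI}$, and then to show that only one choice of $B_{SI}$ attains the optimal rank $c$, so that the multiplicity $\mu$ of Corollary \ref{cor1} is forced to be $1$. Since $|K_i|=1$ for every $i$ and the single known messages are pairwise distinct, $\sum_{i=1}^n |K_i| = n$, and the matrix $A_{SI}$ is $n\times n$ with exactly one $1$ per row and (by distinctness) exactly one $1$ per column; hence $A_{SI}$ is the permutation matrix $P_\sigma$ of the bijection $\sigma$ sending each receiver $R_i$ to the index $\sigma(i)$ of the message it knows. Likewise $B_{SI}$ is $n\times n$ and, by the description of its support, is forced to be the diagonal matrix $\mathrm{diag}(d_1,\dots,d_n)$ with $d_i\in\{0,1\}$. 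Because each $R_i$ wants $x_i$ but knows $x_{\sigma(i)}$, we have $\sigma(i)\neq i$, so $\sigma$ is a fixed-point-free permutation.

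First I would compute the fitting matrix $I - B_{SI}A_{SI}$ of \eqref{eqn:minnum1} entrywise over $\mathbb{F}_2$: its diagonal is all ones and its only other nonzero entries are $d_i$ in position $(i,\sigma(i))$. Reordering rows and columns according to the disjoint cycles $C_1,\dots,C_k$ of $\sigma$ puts this matrix in block-diagonal form, one block per cycle; the block of a cycle of length $\ell$ is the cyclic bidiagonal matrix with ones on the diagonal and the corresponding $d_i$'s on the cyclic superdiagonal. The key computation is that, over $\mathbb{F}_2$, the determinant of such a block equals $1 + \prod_{i\in C_j} d_i$, since the only permutations contributing to the Leibniz expansion are the identity and the full cyclic shift. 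Deleting one row and the matching column of a block always leaves a unit upper-triangular submatrix, so each block has rank $\ell$ when $\prod_{i\in C_j}d_i=0$ and rank $\ell-1$ (its minimum possible value) when all the $d_i$ in that cycle equal $1$.

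Summing over blocks, the rank of $I-B_{SI}A_{SI}$ equals $n - \#\{\,j : \text{all } d_i=1 \text{ on } C_j\,\}$, whose minimum over all choices of $B_{SI}$ is $n-k$; hence the optimal length is $c=n-k$, attained exactly when every block is singular, i.e.\ when $d_i=1$ for all $i$, i.e.\ when $B_{SI}=I$. Thus $B_{SI}=I$ is the unique choice of side-information matrix producing a rank-$c$ right-hand side in \eqref{eqn:minnum1}, so there is a single row space and $\mu=1$. Substituting $\mu=1$ into Corollary \ref{cor1} shows the count of optimal codes equals $\tfrac{1}{c!}\prod_{i=0}^{c-1}(2^c-2^i)$, the lower bound of Theorem \ref{thm2}, which is the desired equality. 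The main obstacle I anticipate is the rank bookkeeping for the cyclic blocks, namely establishing cleanly that each cycle drops the rank by at most one and that the total minimum rank $n-k$ forces $B_{SI}=I$ uniquely; once this is in hand, the reduction to $\mu=1$ via Corollary \ref{cor1} is immediate.
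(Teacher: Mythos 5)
Your proof is correct, and it takes a genuinely different route from the paper's. The paper also begins by pinning down $A_{SI}$ as a fixed-point-free permutation matrix and $B_{SI}$ as a diagonal $0/1$ matrix, but from there it goes graph-theoretic: it builds the information-flow graph of the problem, observes that it is a disjoint union of cycles, invokes the pruning-algorithm formula of Ong and Ho \cite{OnH} to write the optimal length as $\sum_i (V(G'_{sub,i})-1)$, and then argues that zeroing any diagonal entry of $B_{SI}$ deletes a node, creates arcs outside the cycles, and strictly increases that formula --- forcing $B_{SI}=I$ and hence $\mu=1$. You instead stay entirely inside linear algebra over $\mathbb{F}_2$: you conjugate $I-B_{SI}A_{SI}$ into block-diagonal form along the cycles of the permutation, compute each cyclic-bidiagonal block's determinant as $1+\prod_{i\in C_j} d_i$ via the two-matching Leibniz argument, and use the unit-triangular principal minor to show each block's rank drops by at most one, so that $\mathrm{rank}(I-B_{SI}A_{SI}) = n - \#\{j: d_i=1 \ \forall i \in C_j\}$ is minimized (value $n-k$) only at $B_{SI}=I$. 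Your route is self-contained --- it does not lean on the external result of \cite{OnH} --- and it makes rigorous a step the paper treats somewhat informally (that discarding any side-information bit strictly increases the achievable length); it also re-derives the optimal-length formula $c=\sum_j(\ell_j-1)$ of \eqref{permute} as a byproduct. What the paper's approach buys in exchange is the connection to the information-flow-graph machinery, which it then reuses for the multi-stage reductions in Corollaries \ref{cor3} and \ref{cor4}. One presentational caveat: your argument implicitly uses the fact that the optimal length equals the minimum rank of $I-B_{SI}A_{SI}$ over all admissible $B_{SI}$; this is legitimate since the paper establishes exactly this (the fitting-matrix/minrank remark following Theorem \ref{thm2}), but you should cite that step explicitly rather than treat it as automatic.
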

\begin{proof}: Consider a single unicast-single uniprior problem with $n$ receivers and $n$ messages. By construction, $B_{SI}=xI_{n\times n}$, which is a diagonal matrix with entries $x$ each one of which can take values from  $\{0,1\}$. Since the $i^{th}$ receiver does not know $x_i$ a priori, $(i,i)^{th}$ element in $A_{SI}$ is $0$ $ \forall i \in \{1,2,\ldots, n\}$. The matrix $A_{SI}$ is a permutation matrix which has no $1$s on the diagonal. It permutes the columns of $B_{SI}$ such that $j^{th}$ column in $B_{SI}A_{SI}$ is not identical to the $j^{th}$ column of $B_{SI}$ for any $j\in\{1,2,\ldots, n\}$. So, the matrix $B_{SI}A_{SI}$ has $0s$ on its diagonal. This implies that the $n \times n$ matrix $I-B_{SI}A_{SI}$ has $1s$ along its diagonal. Its $i^{th}$ row corresponds to the $i^{th}$ receiver $R_i$ and $i^{th}$ column corresponds to the message $x_i$.  In a given row $r_i$, there are at most two $1s$. The $1$ on the diagonal corresponds to the message $x_i$ wanted by $R_i$ and the other 1 (say $(i,j)^{th}$ position) corresponds to the message $x_j$ known a priori by $R_i$.

        Information flow graph $G$ on $n$ nodes can be constructed as follows. By convention, node $i$ corresponds to the receiver that knows message $x_i$. It has one incoming edge $(j,i)$ originating from the node $j$ where $x_j$ is the message wanted by the receiver $R_i$ and known by the receiver $R_j$. It has one outgoing edge $(i,k)$ terminating at node $k$ where $x_i$ is the message wanted by the receiver $R_k$ and known by $R_i$.

        Identify the receiver that knows $x_1$. Suppose $i^{th}$ row has $x$ in the $1^{st}$ column. This means $R_i$ knows $x_1$. Draw arc $(i,1)$. Search along the $1^{st}$ row to identify the message known a priori by $R_1$. Suppose it is $x_j$. Draw arc $(1,j)$. Now, search along the $j^{th}$ row to identify the message known a priori by $R_j$. Since the problem is single unicast, it can be either $x_k$ or $x_1$. If $x_k$, draw arc $(j,k)$ else draw arc $(j,1)$. Each node has only one incoming edge since every receiver wants a unique message. Each node has only one outgoing edge as the message known a priori by the corresponding receiver is demanded by some other node. Since there are a finite number of nodes, $n$ we can conclude that the information-flow graph $G$ for a single unicast-single uniprior problem will be either one cycle of $n$ nodes or a set of disjoint cycles.

    It was shown in \cite{OnH} that for any single-uniprior problem represented by an information-flow graph $G(\mathcal{V},\mathcal{A})$, after executing the Pruning Algorithm, we have
    \begin{equation}
    \label{eqn:SUCSUP}
        l^{*}(G)=\sum_{i=1}^{N_{sub}}(V(G^{'}_{sub,i})-1)+A(G^{'}\setminus G^{'}_{sub})
        \end{equation}
        where $ l^{*}(G)$ is the optimal length of the index code, $A(G^{'} \setminus G^{'}_{sub})$ is the number of arcs in $\mathcal{A}^{'}\setminus \mathcal{A}_{sub}$. $G^{'}=G^{'}_{sub}\cup(G^{'}\setminus G^{'}_{sub})$ where
    $G^{'}_{sub}=\bigcup_{i=1}^{N_{sub}}G^{'}_{sub,i}$ is a graph consisting of non-trivial strongly connected components $\{G^{'}_{sub,i}\}$ and $G^{'} \setminus G^{'}_{sub}\equiv(\mathcal{V}^{'}\setminus\mathcal{V}^{'}_{sub},\mathcal{A}^{'}\setminus\mathcal{A}^{'}_{sub})$.

        When $G$ consists of only cycles, $A(G^{'} \setminus G^{'}_{sub})=0$. Thus, for a Single Unicast-Single Uniprior problem, the optimal length is given by
\begin{equation}
\label{equation:SUCSUP}
 c =\sum_{i=1}^{N_{sub}}(V(G^{'}_{sub,i}-1))
\end{equation}

        Suppose that in the $B_{SI}$ matrix any $1$ is replaced with $0$, i.e., the corresponding side information is not used in decoding. This means a node is removed from the graph $G$. The resulting graph $\mathcal{G}$ will have arcs apart from cycles. Consequently, $A(G^{'} \setminus G^{'}_{sub})$ component will be non-zero. This will increase the value of $l^{*}(G)$ which is the minimum number of transmissions required. Hence, we conclude that every side information bit must be used. Thus there is only one possible choice of $B_{SI} $ matrix in which all $x$ take value $1$. So $\mu=1$.
\end{proof}
        Note that we can easily  compute the optimal length of the single unicast-single uniprior problem without going into the pruning algorithm of \cite{OnH}. This is done by inspecting the `cycles' of $G$ from $I-B_{SI}A_{SI}$ matrix as described in the proof. Thus we have a simpler way of finding the optimal length $c$ for a Single Unicast-Single Uniprior IC problem. 

        There is yet another way of finding the optimal length $c$ for a single unicast-single uniprior problem using appropriate permutations corresponding to $A_{SI}$ of the given problem. We know that $A_{SI}$ is a permutation matrix that permutes the $n$ columns of $B_{SI}$.  Every permutation on a finite set can be written as a cycle or as a product of disjoint cycles. Once we have the cycle decomposition of the permutation corresponding to $A_{SI}$, let $l_1,l_2,\ldots,l_k$ be the lengths of its disjoint cycles. The optimal length is given by 
\begin{equation}
\label{permute}
c=\sum\limits_{i=1}^{k} (l_i-1). 
\end{equation}
This means  that for Single-Unicast-Single-Uniprior problems all the information are available in $A_{SI}.$  These advantages are illustrated in the following two examples.

\begin{example}
\label{ex5}
Consider the Single Unicast-Single Uniprior problem given in Table \ref{tab4} with the number of receivers( equivalently, the number of messages), $n=10$. For this problem,
$$B_{SI}=\left[ \begin{array}{cccccccccc}
x & 0 & 0 & 0 & 0 & 0 & 0 & 0 & 0 & 0\\
0 & x & 0 & 0 & 0 & 0 & 0 & 0 & 0 & 0\\
0 & 0 & x & 0 & 0 & 0 & 0 & 0 & 0 & 0\\
0 & 0 & 0 & x & 0 & 0 & 0 & 0 & 0 & 0\\
0 & 0 & 0 & 0 & x & 0 & 0 & 0 & 0 & 0\\
0 & 0 & 0 & 0 & 0 & x & 0 & 0 & 0 & 0\\
0 & 0 & 0 & 0 & 0 & 0 & x & 0 & 0 & 0\\
0 & 0 & 0 & 0 & 0 & 0 & 0 & x & 0 & 0\\
0 & 0 & 0 & 0 & 0 & 0 & 0 & 0 & x & 0\\
0 & 0 & 0 & 0 & 0 & 0 & 0 & 0 & 0 & x\\
\end{array}
\right]$$
        
Note that $x \in \{0,1\}$. 
$$A_{SI}=\left[ \begin{array}{cccccccccc}
0 & 1 & 0 & 0 & 0 & 0 & 0 & 0 & 0 & 0\\
0 & 0 & 1 & 0 & 0 & 0 & 0 & 0 & 0 & 0\\
0 & 0 & 0 & 1 & 0 & 0 & 0 & 0 & 0 & 0\\
1 & 0 & 0 & 0 & 0 & 0 & 0 & 0 & 0 & 0\\
0 & 0 & 0 & 0 & 0 & 0 & 0 & 1 & 0 & 0\\
0 & 0 & 0 & 0 & 0 & 0 & 1 & 0 & 0 & 0\\
0 & 0 & 0 & 0 & 0 & 1 & 0 & 0 & 0 & 0\\
0 & 0 & 0 & 0 & 0 & 0 & 0 & 0 & 1 & 0\\
0 & 0 & 0 & 0 & 0 & 0 & 0 & 0 & 0 & 1\\
0 & 0 & 0 & 0 & 1 & 0 & 0 & 0 & 0 & 0\\
\end{array}
\right]$$

$$I-B_{SI}A_{SI}=\left[ \begin{array}{cccccccccc}
1 & x & 0 & 0 & 0 & 0 & 0 & 0 & 0 & 0\\
0 & 1 & x & 0 & 0 & 0 & 0 & 0 & 0 & 0\\
0 & 0 & 1 & x & 0 & 0 & 0 & 0 & 0 & 0\\
x & 0 & 0 & 1 & 0 & 0 & 0 & 0 & 0 & 0\\
0 & 0 & 0 & 0 & 1 & 0 & 0 & x & 0 & 0\\
0 & 0 & 0 & 0 & 0 & 1 & x & 0 & 0 & 0\\
0 & 0 & 0 & 0 & 0 & x & 1 & 0 & 0 & 0\\
0 & 0 & 0 & 0 & 0 & 0 & 0 & 1 & x & 0\\
0 & 0 & 0 & 0 & 0 & 0 & 0 & 0 & 1 & x\\
0 & 0 & 0 & 0 & x & 0 & 0 & 0 & 0 & 1\\
\end{array}
\right]$$       
The permutation corresponding to the $A_{SI}$ matrix is 
\[ \left(  \begin{array}{cccccccccc}
1 ~2 ~ 3 ~  4 ~ 5 ~  6 ~ 7 ~ 8 ~ 9  ~10 \\
2 ~3 ~ 4 ~  1 ~ 8 ~ 7  ~ 6 ~ 9 ~  10 ~ 5
\end{array}            \right)  \]

which in terms of cycles is $(1 ~ 2 ~ 3 ~ 4)(6 ~ 7) (5 ~ 8 ~ 9 ~ 10 )$
from which we get the optimal length, using \eqref{permute} to be $7.$ 

     The information-flow graph obtained from the $I-B_{SI}A_{SI}$ matrix for this problem is shown in figure \ref{figure:G1}.

        \begin{figure}[htbp]
                \centering
                \includegraphics[scale=0.75]{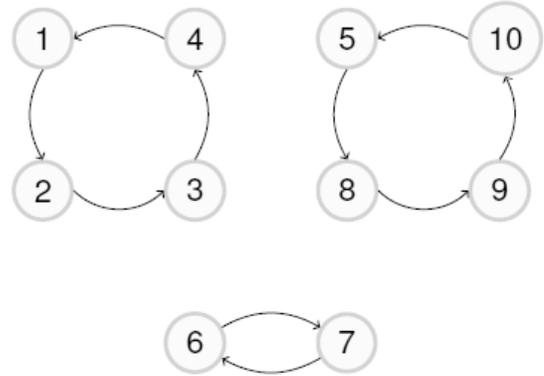}
                \caption{Information flow graph for Example \ref{ex5}}
                \label{figure:G1}
        \end{figure}
Optimal length of the linear index code calculated from the information flow graph is$$c=\sum_{i=1}^{N_{sub}}(V(G^{'}_{sub,i})-1)\\=3+3+1\\=7$$

The number of optimal index codes given by \ref{cor1} is 
$$N_{OIC}=\frac{1}{7!} \prod_{i=0}^{7-1}{(2^{7}-2^{i})}=3.2510\times10^{10}$$

\begin{table}
\label{table3}
\begin{center}
\scriptsize{
\begin{tabular}{|c|c|c|} \hline
{Receiver, $R_i$} &  {Demand set, $\mathcal{W}_i$} & {Side-Information,$\mathcal{K}_i$}\\
\hline  
$R_1$  & $x_1$ &$x_2$ \\
\hline  
$R_2$ & $x_2$ & $x_3$\\ 
\hline  
$R_3$ & $x_3$ & $x_4$\\ 
\hline  
$R_4$ & $x_4$ & $x_1$\\ 
\hline  
$R_5$ & $x_5$ & $x_8$\\ 
\hline  
$R_6$ & $x_6$ & $x_7$\\ 
\hline  
$R_7$ & $x_7$ & $x_6$\\ 
\hline  
$R_8$ & $x_8$ & $x_9$\\ 
\hline  
$R_9$ & $x_9$ & $x_{10}$\\
\hline  
$R_{10}$ & $x_{10}$ & $x_5$\\ 
\hline  
\end{tabular}
}
\end{center}
\caption{\small Single Unicast Single Uniprior problem in Example \ref{ex5}}
\label{tab4}
\end{table}
\end{example}
\begin{example}
\label{ex6}

Consider the Single Unicast-Single Uniprior problem given in Table \ref{tab5} with $n=5$.
For this problem,$$B_{SI}=\left[ \begin{array}{ccccc}
x & 0 & 0 & 0 & 0 \\
0 & x & 0 & 0 & 0 \\
0 & 0 & x & 0 & 0 \\
0 & 0 & 0 & x & 0 \\
0 & 0 & 0 & 0 & x \\
\end{array}
\right]$$        
Note that $x \in \{0,1\}$.
$$A_{SI}=\left[ \begin{array}{ccccc}
0 & 1 & 0 & 0 & 0 \\
0 & 0 & 1 & 0 & 0 \\
1 & 0 & 0 & 0 & 0 \\
0 & 0 & 0 & 0 & 1 \\
0 & 0 & 0 & 1 & 0 \\
\end{array}
\right]$$

$$I-B_{SI}A_{SI}=\left[ \begin{array}{ccccc}
1 & x & 0 & 0 & 0 \\
0 & 1 & x & 0 & 0 \\
x & 0 & 1 & 0 & 0 \\
0 & 0 & 0 & 1 & x \\
0 & 0 & 0 & x & 1 \\
\end{array}
\right]$$        
The permutation corresponding to the $A_{SI}$ matrix is
\[ \left(  \begin{array}{ccccc}
1 ~2 ~ 3 ~  4 ~ 5  \\
2 ~3 ~ 1 ~  5 ~ 4 
\end{array}            \right)  \]

which in terms of cycles is $(1 ~ 2 ~ 3 )(4 ~ 5) $
from which we get the optimal length, using \eqref{permute} to be $3.$

        The information-flow graph obtained from the $I-B_{SI}A_{SI}$ matrix for this problem is shown in figure \ref{figure:G2}.
\begin{figure}[htbp]
        \centering
        \includegraphics[scale=0.75]{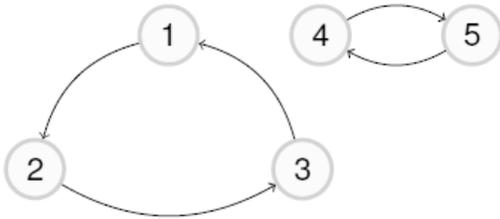}
        \caption{Information-flow graph for Example \ref{ex6}}
        \label{figure:G2}
\end{figure}
        
The optimal length of the linear index code, from the information flow graph is 
$$c=\sum_{i=1}^{N_{sub}}(V(G^{'}_{sub,i})-1)\\=2+1\\=3$$

The number of optimal index codes 
$$N_{OIC}=\frac{1}{3!} \prod_{i=0}^{3-1}{(2^{3}-2^{i})}=28$$

\begin{table}
\label{table4}
\begin{center}
\scriptsize{
\begin{tabular}{|c|c|c|}
\hline  
{Receiver, $R_i$} &  {Demand set, $\mathcal{W}_i$} & {Side-Information,$\mathcal{K}_i$}\\
\hline  
$R_1$  & $x_1$ &$x_2$ \\
\hline  
$R_2$ & $x_2$ & $x_3$\\ 
\hline  
$R_3$ & $x_3$ & $x_1$\\ 
\hline  
$R_4$ & $x_4$ & $x_5$\\ 
\hline  
$R_5$ & $x_5$ & $x_4$\\ 
\hline  

\end{tabular}
}       
\end{center}

\caption{\small Single Unicast Single Uniprior problem in Example \ref{ex6}}
\label{tab5}
\end{table}
        The table \ref{table11} lists these $28$ linear index codes with optimal length, $c=3$. Note that every set of $c$ basis vectors of the row space of $I-B_{SI}A_{SI}$ matrix gives an optimal index code.

        \begin{table}
                \begin{center}
                        \scriptsize{
                                \begin{tabular}{|c|c|}
                                        \hline
                                        {Code, $C_i$} &  {Encoding} \\
                                        \hline
                                        $\mathfrak{C}_1$  & $x_1+x_2$,\quad$x_2+x_3$,\quad$x_4+x_5$\\
                                        \hline
                                        $\mathfrak{C}_2$ & $x_1+x_3$,\quad $x_1+x_2$,\quad $x_4+x_5$\\
                                        \hline
                                        $\mathfrak{C}_{3}$ & $x_1+x_3$,\quad$x_2+x_3$,\quad$x_4+x_5$ \\
                                        \hline
                                        $\mathfrak{C}_4$ & $x_1+x_3$,\quad$x_2+x_3$,\quad$x_1+x_3+x_4+x_5$\\
                                        \hline
                                        $\mathfrak{C}_5$ & $x_1+x_3$,\quad $x_4+x_5$,\quad$x_2+x_3+x_4+x_5$ \\
                                        \hline
                                        $\mathfrak{C}_{6}$ & $x_1+x_3$,\quad$x_2+x_3$,\quad$x_2+x_3+x_4+x_5$ \\
                                        \hline
                                        $\mathfrak{C}_7$ & $x_1+x_2$,\quad $x_1+x_3$,\quad $x_1+x_2+x_4+x_5$\\
                                        \hline
                                        $\mathfrak{C}_8$ & $x_1+x_2$,\quad $x_1+x_3$,\quad$x_1+x_2+x_4+x_5$\\
                                        \hline
                                        $\mathfrak{C}_9$ & $x_1+x_2$,\quad $x_4+x_5$,\quad$x_2+x_3+x_4+x_5$ \\
                                        \hline
                                        $\mathfrak{C}_{10}$&$x_1+x_2$,\quad$x_2+x_3$,\quad$x_1+x_2+x_4+x_5$\\
                                        \hline

                                        $\mathfrak{C}_{11}$ & $x_1+x_2$,\quad$x_2+x_3$,\quad$x_2+x_3+x_4+x_5$ \\
                                        \hline
                                        $\mathfrak{C}_{12}$ & $x_4+x_5$,\quad$x_1+x_3$,\quad$x_1+x_2+x_4+x_5$ \\
                                        \hline
                                        $\mathfrak{C}_{13}$ & $x_4+x_5$,\quad$x_2+x_3$,\quad$x_1+x_2+x_4+x_5$ \\
                                        \hline
                                        $\mathfrak{C}_{14}$ & $x_4+x_5$,\quad$x_1+x_2$,\quad$x_1+x_3+x_4+x_5$ \\
                                        \hline
                                        $\mathfrak{C}_{15}$ &$x_4+x_5$,\quad$x_2+x_3$,\quad$x_1+x_3+x_4+x_5$ \\
                                        \hline
                                        $\mathfrak{C}_{16}$ & $x_1+x_3$,\quad$x_1+x_2+x_4+x_5$,\quad$x_1+x_3+x_4+x_5$ \\
                                        \hline
                                        $\mathfrak{C}_{17}$ & $x_2+x_3$,\quad$x_2+x_3+x_4+x_5$,\quad$x_1+x_3+x_4+x_5$ \\
                                        \hline
                                        $\mathfrak{C}_{18}$ &  $x_1+x_2$,\quad$x_2+x_3+x_4+x_5$,\quad$x_1+x_2+x_4+x_5$ \\
                                        \hline
                                        $\mathfrak{C}_{19}$ & $x_4+x_5$,\quad$x_1+x_3+x_4+x_5$,\quad$x_2+x_3+x_4+x_5$ \\
                                        \hline
                                        $\mathfrak{C}_{20}$ & $x_4+x_5$,\quad$x_1+x_3+x_4+x_5$,\quad$x_1+x_2+x_4+x_5$ \\
                                        \hline
                                        $\mathfrak{C}_{21}$ & $x_4+x_5$,\quad$x_1+x_2+x_4+x_5$,\quad$x_2+x_3+x_4+x_5$ \\
                                        \hline
                                        $\mathfrak{C}_{22}$ & $x_1+x_2$,\quad$x_1+x_2+x_4+x_5$,\quad$x_2+x_3+x_4+x_5$ \\
                                        \hline
                                        $\mathfrak{C}_{23}$ & $x_1+x_2$,\quad$x_1+x_3+x_4+x_5$,\quad$x_1+x_2+x_4+x_5$ \\
                                        \hline
                                        $\mathfrak{C}_{24}$ & $x_1+x_3$,\quad$x_2+x_3+x_4+x_5$,\quad$x_1+x_3+x_4+x_5$ \\
                                        \hline
                                        $\mathfrak{C}_{25}$ & $x_1+x_2$,\quad$x_1+x_3$,\quad$x_2+x_3+x_4+x_5$ \\
                                        \hline
                                        $\mathfrak{C}_{26}$ & $x_2+x_3$,\quad$x_1+x_3$\quad$x_1+x_2+x_4+x_5$ \\
                                        \hline
                                        $\mathfrak{C}_{27}$ & $x_1+x_2$,\quad$x_2+x_3$,\quad$x_1+x_3+x_4+x_5$ \\
                                        \hline
                                        $\mathfrak{C}_{28}$ & $x_1+x_2+x_4+x_5$,$x_1+x_3+x_4+x_5$,$x_2+x_3+x_4+x_5$ \\
                                        \hline
                                \end{tabular}
                        }
                \end{center}

                \caption{\small All possible optimal length index codes for Example \ref{ex6} }
                \label{table11}
        \end{table}
\end{example}
\begin{corollary}
\label{cor3}
For a single unicast-uniprior problem $\mu =1$ and the number of optimal index codes is given by $$ \frac{1}{c!} \prod_{i=0}^{c-1}{(2^{c}-2^{i})}$$
where $c$ is the optimal length. The value of $c$ can be found and it is given by \eqref{eqnSUCUP}.
\end{corollary}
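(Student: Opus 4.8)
The plan is to reduce the single unicast--uniprior problem to a single unicast--single uniprior problem and then invoke Corollary \ref{cor2} together with Corollary \ref{cor1}. First I would fix the demand labelling so that $R_i$ wants $x_i$ (single unicast forces $n=m$ and distinct demands), and record the only structural consequence of the uniprior hypothesis that I need: each message is a priori known to at most one receiver, so every column of $B_{SI}A_{SI}$, and hence of the fitting matrix $I-B_{SI}A_{SI}$, carries at most one off-diagonal $1$. Equivalently, in the information-flow graph $G$ on $n$ nodes with an arc $i\to j$ whenever $x_j\in K_i$, every vertex has in-degree at most one.

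Next I would exploit this degree bound. In a digraph in which every vertex has in-degree at most one, any strongly connected component on more than one vertex must have every internal vertex of in-degree exactly one, and such a component is necessarily a single directed cycle. Hence the cyclic core of $G$ is a disjoint union of simple cycles, which is exactly the graph attached to a single unicast--single uniprior instance; the remaining arcs form the acyclic part. Running the pruning algorithm of \cite{OnH} as in the proof of Corollary \ref{cor2}, the optimal length splits as a sum of $(\text{cycle length}-1)$ over the cycles plus the number of arcs surviving pruning, which is the closed form \eqref{eqnSUCUP}.

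The heart of the argument is the claim $\mu=1$, i.e. that every rank-$c$ matrix of the form $I-B_{SI}A_{SI}$ has the same row space. I would split the usable side-information bits into two kinds. A bit lying on a cycle is \emph{essential}: dropping it breaks a cycle into a path and, by the Ong--Ho length formula used in Corollary \ref{cor2}, strictly increases the optimal length, so every such bit must be switched on, which fixes the corresponding entries of $B_{SI}$. A bit $x_j\in K_i$ lying in the acyclic part is \emph{inessential}: in any optimal scheme the message $x_j$ is transmitted uncoded, so the standard basis vector $e_j$ already belongs to the row space of every minimal fitting matrix; toggling such a bit only adds $e_j$ to the $i$-th row, which cannot change a row space that already contains $e_j$. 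Consequently all rank-$c$ fitting matrices share a single row space, so $\mu=1$.

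Finally, substituting $\mu=1$ into Corollary \ref{cor1} gives the count $\frac{1}{c!}\prod_{i=0}^{c-1}(2^c-2^i)$, with $c$ the optimal length \eqref{eqnSUCUP} computed above. The step I expect to be the main obstacle is the treatment of the inessential bits: one must argue carefully that every directly transmitted message contributes its basis vector to the row space of \emph{every} minimizer, not merely to a canonical one, so that the toggling argument is valid uniformly and the row space is genuinely forced.
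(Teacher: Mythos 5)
Your structural route is genuinely different from the paper's: the paper never analyzes the full graph at once, but instead strips receivers with empty side information stage by stage (transmitting their wanted messages uncoded), iterates until a single unicast--single uniprior core remains, and then invokes Corollary \ref{cor2}; you work directly with the in-degree-at-most-one property and the row space of $I-B_{SI}A_{SI}$. Your observation that the cyclic core is a disjoint union of simple cycles is correct. However, your length formula contains a genuine error: the cost of the acyclic part is the number of non-cycle \emph{vertices} (this is exactly $\sum_i\lambda_i$ in \eqref{eqnSUCUP}), not the number of acyclic \emph{arcs}. The two counts differ by the number of messages known a priori to no receiver (non-cycle vertices of in-degree zero), and the paper's own Example \ref{ex8} refutes the arc count: that graph is acyclic with $\sum_i|\mathcal{K}_i|=9$ arcs, so your formula gives $c=9$, while the true optimal length is $c=10$; the message $x_1$ is the head of no arc yet still costs a transmission. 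The underlying problem is that the pruning formula of \cite{OnH} is proved for \emph{single uniprior} problems, whose information-flow graph has arcs encoding demands; your graph encodes side information, and a single unicast--uniprior problem is not single uniprior, so the formula cannot be invoked ``as in Corollary \ref{cor2}'' --- there it is applied only after reduction to a union of cycles, where the arc term vanishes.

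Your $\mu=1$ argument also rests on a false intermediate claim: it is not true that ``in any optimal scheme the message $x_j$ is transmitted uncoded'' (for $R_1$ wanting $x_1$ knowing $x_2$, and $R_2$ wanting $x_2$ knowing nothing, the optimal code $\{x_1\oplus x_2,\;x_2\}$ never sends $x_1$ uncoded). What you actually need, and what is true, is that $e_j$ lies in the row space of \emph{every} matrix $I-B_{SI}A_{SI}$ (minimal rank or not) for every non-cycle vertex $j$. This follows by reverse topological induction on the acyclic part: a sink $j$ has row exactly $e_j$; and since each cycle vertex already receives its in-degree-one arc from within its cycle, no arc leaving a non-cycle vertex can enter a cycle, so all out-neighbours of $j$ are non-cycle vertices covered by the inductive hypothesis, whence $e_j=r_j+\sum_k b_{jk}e_k$ lies in the row space. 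With this repair, your essential/inessential split does prove $\mu=1$: modulo the span of $\{e_j : j \text{ non-cycle}\}$, each cycle of length $\ell$ contributes rank $\ell-1$ exactly when all its bits are on and rank $\ell$ otherwise, so every minimal-rank fitting matrix has row space equal to that span plus the span of the cycle differences. As a bonus, the same computation yields the minimum rank $\sum_i(\ell_i-1)+\#\{\text{non-cycle vertices}\}$, recovering \eqref{eqnSUCUP} with no appeal to pruning at all; so your skeleton can be made both correct and more self-contained than the paper's reduction, but only after fixing the arc-versus-vertex count and the row-space justification.
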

\begin{proof}
Let the number of messages be $n.$ Since the problem  is  single unicast the number of receivers is also  $n.$ Let the receiver $R_i$ want the message $x_i$. In a uniprior case, $\mid \mathcal{K}_i \mid \cap \mid \mathcal{K}_j \mid = \phi $. When a receiver, $R_j$  does not have any side-information, the message demanded by it must be transmitted explicitly. In other words, if there is only one $1$ in row $r_j$(say) of the $I-B_{SI}A_{SI}$ matrix, the message $x_j$ demanded by the corresponding receiver $R_j$ must be transmitted as such. We denote the number of receivers having no side-information to begin with  by $\lambda_{1}$.

Once transmitted, $x_j$ becomes available to all the remaining receivers. Since the problem is unicast, no other receiver wants $x_j$ except for the $j^{th}$ one. Knowledge of $x_j$ to the remaining $n-1$ receivers is thus useless and hence both the receiver $R_j$ as well as the message $x_j$ are removed from further consideration. Note that the receiver that previously had $x_j \in \mathcal{K}_i$ for some $i\neq j$ now has only $\mid \mathcal{K}_i \mid -1$ messages.

We get a new index coding problem with $n-\lambda_1$ receivers and $n-\lambda_1$ messages. This process can be continued till we arrive at a single unicast-single uniprior IC problem. Assuming it took $k$ stages to arrive at a single unicast-single uniprior problem with number of receivers $=$ number of messages $=m$, we can express the minimal length of the original single unicast-uniprior problem as follows:
\begin{equation}
\label{eqnSUCUP}
c=\sum_{i=1}^{N_{sub}}(V(G^{'}_{sub,i}-1))+\sum_{i=1}^{k}\lambda_i
\end{equation}
where the summation term corresponds to the single uniprior single unicast  problem with $m$ messages and receivers.  We know from Corollary \ref{cor3} that $\mu=1$ for Single Unicast-Single Uniprior IC problem. This completes the proof.

\end{proof}

\begin{example}
\label{ex7}
        Consider the Single Unicast-Uniprior problem given in Table \ref{tab7} with number of messages and the number of receivers $n=10$. The optimal length for this problem can be computed in multiple stages as shown in Table \ref{tab7}. 

        \begin{table}

                \begin{center}
                        \footnotesize{
                                \begin{tabular}{|c|c|c|}
                                        \multicolumn{3}{l}{STAGE: 1\qquad\qquad\qquad
                                                $n=10$\qquad\qquad \quad $\lambda_1=3$}\\
                                        \multicolumn{3}{c}{}\\

                                        \hline
                                        {Receiver, $R_i$} &  {Demand set, $\mathcal{W}_i$} & {Side-information, $\mathcal{K}_i$}\\

                                        \hline
                                        $R_1$  & $x_1$ &$x_2$ \\
                                        \hline
                                        $R_2$ & $x_2$ & $x_3$\\
                                        \hline
                                        $R_3$ & $x_3$ & $x_1$\\
                                        \hline
                                        $R_4$ & $x_4$ & $x_5$,$x_7$,$x_8$\\
                                        \hline
                                        $R_5$ & $x_5$ & $x_4$,$x_6$\\
                                        \hline
                                        $R_6$ & $x_6$ & $x_9$\\
                                        \hline
                                        $R_7$ & $x_7$ & $x_{10}$\\
                                        \hline
                                        $R_8$ & $x_8$ & $\phi$\\
                                        \hline
                                        $R_9$ & $x_9$ & $\phi$\\
                                        \hline
                                        $R_{10}$ & $x_{10}$ & $\phi$\\
                                        \hline
                                        \multicolumn{3}{r}{}\\
                                        \multicolumn{3}{l}{STAGE: 2\qquad\qquad\qquad
                                                $n=7$\qquad\qquad\qquad$\lambda_2=2$}\\
                                        \multicolumn{3}{r}{}\\

                                        \hline
                                        {Receiver, $R_i$} &  {Demand set, $\mathcal{W}_i$} & {Side-Information,$\mathcal{K}_i$}\\
                                        \hline
                                        $R_1$ & $x_1$ & $x_2$\\
                                        \hline
                                        $R_2$ & $x_2$ & $x_3$\\
                                        \hline
                                        $R_3$ & $x_3$ & $x_1$\\
                                        \hline
                                        $R_4$ & $x_4$ & $x_5,x_7$\\
                                        \hline
                                        $R_5$ & $x_5$ & $x_4,x_6$\\
                                        \hline
                                        $R_6$ & $x_6$ & $\phi$\\
                                        \hline
                                        $R_7$ & $x_7$ & $\phi$\\

                                        \hline
                                        \multicolumn{3}{r}{SINGLE UNICAST-}\\
                                        \multicolumn{3}{l}{STAGE: 3\qquad\qquad
                                                $n=5$\qquad\qquad\qquad  SINGLE UNIPRIOR}\\
                                        \multicolumn{3}{r}{$c=3$\qquad}\\

                                        \hline
                                        {Receiver, $R_i$} &  {Demand set, $\mathcal{W}_i$} & {Side-Information,$\mathcal{K}_i$}\\
                                        \hline
                                $R_1$ & $x_1$ & $x_2$\\
                                \hline
                                $R_2$ & $x_2$ & $x_3$\\
                                \hline
                                $R_3$ & $x_3$ & $x_1$\\
                                \hline
                                $R_4$ & $x_4$ & $x_5$\\
                                \hline
                                $R_5$ & $x_5$ & $x_4$\\
                                \hline

                                \end{tabular}
                        }
                \end{center}
                \caption{\small Single Unicast-Uniprior problem in Example \ref{ex7}}   \label{tab7}
        \end{table}

From Example \ref{ex2}, the minimal length of the single unicast single uniprior problem in STAGE 3 is 3. The optimal length of the Single Unicast-Uniprior problem,$$c=3+\sum_{i=1}^{k=2}\lambda_i=3+3+2=8$$.

The number of optimal index codes,$$N_{OIC}= \frac{1}{c!} \prod_{i=0}^{c-1}{(2^{c}-2^{i})}=1.3264\times10^{14}$$
\end{example}

In some cases, we may not end up in a Single Unicast-Single Uniprior problem at all and the optimal number of transmissions will be same as the number of messages.The following illustrates such an example.
\begin{example}
\label{ex8}
        Consider the Single Unicast-Uniprior problem given Table \ref{tab8}.

\begin{table}[h!]

        \begin{center}
                \footnotesize{
                        \begin{tabular}{|c|c|c|}
                                        \multicolumn{3}{l}{STAGE: 1\qquad\qquad\qquad
                                                $m=n=10$\qquad\qquad\qquad$\lambda_1=4$}\\
                                        \multicolumn{3}{r}{}\\

                                        \hline
                                {Receiver, $R_i$} &  {Demand set, $\mathcal{W}_i$} & {Side-Information,$\mathcal{K}_i$}\\
                                \hline
                                $R_1$  & $x_1$ &$x_2,x_3,x_4$ \\
                                \hline
                                $R_2$ & $x_2$ & $x_9,x_{10}$\\
                                \hline
                                $R_3$ & $x_3$ & $x_5$\\
                                \hline
                                $R_4$ & $x_4$ & $x_6$\\
                                \hline
                                $R_5$ & $x_5$ & $x_7$\\
                                \hline
                                $R_6$ & $x_6$ & $x_8$\\
                                \hline
                                $R_7$ & $x_7$ & $\phi$\\
                                \hline
                                $R_8$ & $x_8$ & $\phi$\\
                                \hline
                                $R_9$ & $x_9$ & $\phi$\\
                                \hline
                                $R_{10}$ & $x_{10}$ & $\phi$\\
                                \hline
                            \multicolumn{3}{r}{}\\
                                \multicolumn{3}{l}{STAGE: 2\qquad\qquad\qquad
                                                                        $m=n=6$\qquad\qquad\qquad$\lambda_2=3$}\\
                                \multicolumn{3}{r}{}\\
                                \hline
                                {Receiver, $R_i$} &  {Demand set, $\mathcal{W}_i$} & {Side-Information,$\mathcal{K}_i$}\\
                                \hline
                            $R_1$ & $x_1$ & $x_2$,$x_3$,$x_4$\\
                            \hline
                                $R_2$ & $x_2$ & $\phi$\\
                                \hline
                                $R_3$ & $x_3$ & $x_5$\\
                                \hline
                                $R_4$ & $x_4$ & $x_6$\\
                                \hline
                                $R_5$ & $x_5$ & $\phi$\\
                                \hline
                                $R_6$ & $x_6$ & $\phi$\\

                                \hline
                                        \multicolumn{3}{r}{}\\
                                        \multicolumn{3}{l}{STAGE: 3\qquad\qquad\qquad
                                                               $m=n=3$\qquad\qquad\qquad$\lambda_3=2$}\\
                                    \multicolumn{3}{r}{}\\

                                \hline
                                        {Receiver, $R_i$} &  {Demand set, $\mathcal{W}_i$} & {Side-Information,$\mathcal{K}_i$}\\
                                        \hline
                                $R_1$ & $x_1$ & $x_3$,$x_4$\\
                                \hline
                                $R_3$ & $x_3$ & $\phi$\\
                                \hline
                                $R_4$ & $x_4$ & $\phi$\\
                                        \hline
                                        \multicolumn{3}{r}{}\\
                                        \multicolumn{3}{l}{STAGE: 4\qquad\qquad\qquad
                                                $n=1$\qquad\qquad\qquad$\lambda_4=1$}\\
                                        \multicolumn{3}{r}{}\\

                                \hline
                                $R_1$&$x_1$&$\phi$\\
                                \hline
                        \end{tabular}
                }
        \end{center}

        \caption{\small Single Unicast-Uniprior problem in Example \ref{ex8}}   \label{tab8}
\end{table}

 The receivers $R_7$,$R_8$,$R_9$,$R_{10}$ do not have any side-information and hence $x_7$,$x_8$,$x_9$ and $x_{10}$ are transmitted. So, $\lambda_1=4$.
Now the problem reduces to $n=5$ receivers and is represented in STAGE:2 of Table \ref{tab8}.

 Since the receivers demanding $x_2,x_5$ and $x_6$ have no side-information, their demands have to be transmitted explicitly,i.e., $\lambda_2=3$. Thus, total number of transmissions, $$c=4+3+2+1=10.$$
        Note that, in this example, the $\sum_{i=1}^{N_{sub}}(V(G^{'}_{sub,i}-1))$ term in  (\ref{eqnSUCUP}) is $0$.

\end{example}


\begin{corollary}
\label{cor4}
For a single uniprior-unicast problem, $\mu=1$ and the number of optimal index codes is given by $$ \frac{1}{c!} \prod_{i=0}^{c-1}{(2^{c}-2^{i})}$$
where $c$ is the optimal length.
\end{corollary}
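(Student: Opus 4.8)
The plan is to mirror the proof of Corollary \ref{cor3}, exploiting the duality between the demand-structure and the prior-structure. A single uniprior-unicast problem has singleton, pairwise-distinct side-information sets and pairwise-disjoint demand sets. First I would pass to a single unicast problem by the standard splitting: a receiver $R_i$ that knows $x_{k_i}$ and demands $W_i=\{x_{w_1},\dots,x_{w_t}\}$ is replaced by $t$ receivers, each demanding a single $x_{w_j}$ and retaining the same prior $x_{k_i}$ (and, before this, discarding any prior that is demanded by no receiver, since such side information is never useful). Because the demands were disjoint, the resulting problem is single unicast; each new receiver still knows exactly one message, so $B_{SI}$ is again diagonal and $A_{SI}$ has exactly one $1$ per row.

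Next I would read off the structure of $I-B_{SI}A_{SI}$ exactly as in Corollary \ref{cor2}. With every prior used ($B_{SI}=I$), row $i$ of the matrix on the right-hand side of \eqref{eqn:minnum1} is $e_i+e_{k_i}$, so the associated information-flow graph is the functional graph placing an out-edge from node $i$ to node $k_i$. Since every receiver knows precisely one message, every node has out-degree one, and each weakly connected component contains exactly one cycle fed by in-trees. Paralleling \eqref{equation:SUCSUP} and the kernel computation of Corollary \ref{cor2}, I would establish that $\mathrm{rank}(I-B_{SI}A_{SI})=n-\gamma$, where $\gamma$ is the number of cycles, and that this rank is minimized, giving the optimal length $c$, exactly when $\gamma$ is maximal, i.e.\ when no cycle is broken.

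The decisive step is to prove $\mu=1$, and this is where I expect the real work to lie. Following Corollary \ref{cor2}, I would argue that replacing any $1$ in $B_{SI}$ by $0$ amounts to deleting an out-edge; deleting a cycle edge strictly decreases $\gamma$ and hence raises the optimal length, so every cycle edge must be retained in any optimal solution. The subtle case is a prior lying on an in-tree rather than on a cycle: deleting such an edge leaves $\gamma$, and hence the rank, unchanged, so it does not by itself exclude an alternative optimal $B_{SI}$. To obtain $\mu=1$ I would therefore have to show that all rank-$c$ choices of $B_{SI}$ nevertheless determine the same null space of $I-B_{SI}A_{SI}$ (spanned by the component-indicator vectors of the graph), which pins down the row space. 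Handling these in-tree priors is the main obstacle; the cleanest way to dispose of it is to recast the splitting as the staged pruning dual to Corollary \ref{cor3}, removing at each stage the demands that are forced to be transmitted explicitly, until a genuine single uniprior-single unicast problem remains, for which Corollary \ref{cor2} already furnishes $\mu=1$.

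Once $\mu=1$ is in hand, the count is immediate: by Corollary \ref{cor1} the number of optimal linear index codes equals $\frac{\mu}{c!}\prod_{i=0}^{c-1}(2^{c}-2^{i})=\frac{1}{c!}\prod_{i=0}^{c-1}(2^{c}-2^{i})$, which is the claimed value, with $c$ the optimal length read off from the cycle (equivalently, the pruned graph) structure.
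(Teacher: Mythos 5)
Your proposal, in its final ``fallback'' form, is essentially the paper's own proof: the paper establishes Corollary~\ref{cor4} by staged pruning (transmit messages known to nobody, delete receivers whose demands are met, repeat) until a single unicast--single uniprior core remains, and then invokes Corollary~\ref{cor2}. However, the step you correctly flag as ``where the real work lies'' cannot be carried out, because the lemma you propose there is false. Consider the single uniprior--unicast problem with four messages in which $R_1$ knows $x_1$ and wants $\{x_2,x_4\}$, $R_2$ knows $x_2$ and wants $x_1$, and $R_3$ knows $x_4$ and wants $x_3$. After your splitting, the functional graph is the $2$-cycle on nodes $1,2$ with tail $3\to 4\to 1$, and the optimal length is $c=3$. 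Keeping all priors ($B_{SI}=I$) gives the rank-$3$ row space $\mathrm{span}\{x_1+x_2,\;x_3+x_4,\;x_1+x_4\}$ with null space spanned by $(1,1,1,1)$; switching off the prior of the receiver wanting $x_3$ (an in-tree edge) keeps the rank at $3$ but gives row space $\mathrm{span}\{x_1+x_2,\;x_3,\;x_1+x_4\}$ with null space spanned by $(1,1,0,1)$; switching off the prior of the split receiver wanting $x_4$ gives a third rank-$3$ row space $\mathrm{span}\{x_1+x_2,\;x_3,\;x_4\}$. All three are feasible (row $i$ of $I-B_{SI}A_{SI}$ always lies in the row space, so $R_i$ can decode), so the rank-$c$ condition does not pin down the null space: deleting an in-tree edge forces the subtree above it to zero instead of leaving the whole component constant. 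Here $\mu=3$, not $1$, and the number of optimal codes is $3\cdot 28=84$, not $28$.

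This same phenomenon defeats your fallback, and in fact the paper's own argument: a demand whose message is known to nobody, but whose demander holds a prior, is \emph{not} ``forced to be transmitted explicitly'' --- $x_3$ above may equally be covered by $x_3+x_4$, which changes the row space without changing the length. The paper's assertion that such messages ``need to be transmitted individually'' is exactly this unjustified step, so what your functional-graph analysis has really uncovered is a genuine gap in the paper itself: the conclusion $\mu=1$ is only safe when the functional graph is a disjoint union of cycles, i.e., when every message is known to exactly one receiver and demanded by exactly one receiver, which is precisely the single unicast--single uniprior case already covered by Corollary~\ref{cor2}. In the presence of in-trees (multi-demand receivers, receivers demanding nothing, or messages known to nobody) the correct count is $\frac{\mu}{c!}\prod_{i=0}^{c-1}(2^c-2^i)$ with $\mu>1$, and the count asserted in Example~\ref{ex9} is affected in the same way. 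So: your approach coincides with the paper's, but neither closes the gap you located, and no argument can, since the statement fails in exactly the cases that distinguish Corollary~\ref{cor4} from Corollary~\ref{cor2}.
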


\begin{proof}
Let the number of messages be $n$ and the number of receivers be $m.$ Let receiver $R_i$ know a single unique message $x_i$ a priori. Clearly $m \leq n.$ If $m < n,$ then there are $n-m$ messages which are not known to any of the receivers apriori and hence these messages need to be transmitted individually. We assume that this is done and after this, the remaining problem reduces to that of the case where $m=n=n_1.$  Now, let the receiver $R_i$ want $\mathcal{W}_i$ messages. Since the problem is unicast, unless each  $\mathcal{W}_i$ has only one element, there will be receivers having apriori information but not wanting any message. Such receivers can be removed from further consideration. After removing such receivers  let the number of receivers in the  problem be $m_1.$   Now if $m_1=n_1$ then we have a problem of single uniprior single unicast for which $\mu=1.$  On the contrary, if $m_1 < n_1,$ then we repeat the process and eventually we will end up with a single uniprior single unicast problem and will have $\mu=1$  by Corollary \ref{cor2}. This completes the proof.
\end{proof}
\begin{example}
\label{ex9}
Consider the Single Uniprior-Unicast problem given in Table \ref{tab9} with number of messages, $n=10$ and number of receivers, $m=8$.

        The messages $x_{10}$ and $x_9$ have to be transmitted explicitly since they are not part of any receiver's side-information. Once done, the receiver $R_7$ can be eliminated from further consideration as its demand has been met. The problem reduces to STAGE:2 with $m=7$ receivers and $n=8$ messages. Again, since $x_7$ is not part of any side-information it has to be transmitted explicitly. The problem now reduces to Single Unicast-Single Uniprior case with $n=m=7$.
        The optimal length of this IC is $$c=2+1+4=7$$
        The number of optimal index codes,$$N_{OIC}= \frac{1}{7!} \prod_{i=0}^{6}{(2^{7}-2^{i})}=3.251\times 10^{10}$$
        \begin{table}

                \begin{center}
                        \footnotesize{
                                \begin{tabular}{|c|c|c|}
                                        \multicolumn{3}{l}{STAGE: 1\qquad\qquad\qquad
                                                $m=8,n=10$\qquad\qquad \quad $\lambda_1=2$}\\
                                        \multicolumn{3}{c}{}\\

                                        \hline
                                        {Receiver, $R_i$} &  {Demand set, $\mathcal{W}_i$} & {Side-information, $\mathcal{K}_i$}\\

                                        \hline
                                        $R_1$  & $x_3$ &$x_1$ \\
                                        \hline
                                        $R_2$ & $x_1$ & $x_2$\\
                                        \hline
                                        $R_3$ & $x_2$ & $x_3$\\
                                        \hline
                                        $R_4$ & $x_5$,$x_{10}$ & $x_4$\\
                                        \hline
                                        $R_5$ & $x_4$ & $x_5$\\
                                        \hline
                                        $R_6$ & $x_8$ & $x_6$\\
                                        \hline
                                        $R_7$ & $x_9$ & $x_7$\\
                                        \hline
                                        $R_8$ & $x_7$,$x_6$ & $x_8$\\
                                        \hline
                                        \multicolumn{3}{r}{}\\
                                        \multicolumn{3}{l}{STAGE: 2\qquad\qquad\qquad
                                                $m=7,n=8$\qquad\qquad\qquad$\lambda_2=1$}\\
                                        \multicolumn{3}{r}{}\\

                                        \hline
                                        {Receiver, $R_i$} &  {Demand set, $\mathcal{W}_i$} & {Side-Information,$\mathcal{K}_i$}\\

                                        \hline
                                        $R_1$  & $x_3$ &$x_1$ \\
                                        \hline
                                        $R_2$ & $x_1$ & $x_2$\\
                                        \hline
                                        $R_3$ & $x_2$ & $x_3$\\
                                        \hline
                                        $R_4$ & $x_5$ & $x_4$\\
                                        \hline
                                        $R_5$ & $x_4$ & $x_5$\\
                                        \hline
                                        $R_6$ & $x_8$ & $x_6$\\
                                        \hline
                                        $R_8$ & $x_7$,$x_6$ & $x_8$\\

                                        \hline
                                        \multicolumn{3}{r}{SINGLE UNICAST-}\\
                                        \multicolumn{3}{l}{STAGE: 3\qquad\qquad
                                                $n=m=7$\qquad\qquad\qquad  SINGLE UNIPRIOR}\\
                                        \multicolumn{3}{r}{$c=4$\qquad}\\

                                        \hline
                                        {Receiver, $R_i$} &  {Demand set, $\mathcal{W}_i$} & {Side-Information,$\mathcal{K}_i$}\\
                                                \hline
                                                $R_1$  & $x_3$ &$x_1$ \\
                                                \hline
                                                $R_2$ & $x_1$ & $x_2$\\
                                                \hline
                                                $R_3$ & $x_2$ & $x_3$\\
                                                \hline
                                                $R_4$ & $x_5$ & $x_4$\\
                                                \hline
                                                $R_5$ & $x_4$ & $x_5$\\
                                                \hline
                                                $R_6$ & $x_8$ & $x_6$\\
                                                \hline
                                                $R_8$ & $x_6$ & $x_8$\\
                                                \hline

                                \end{tabular}
                        }
                \end{center}

                \caption{\small Single Uniprior-Unicast problem in Example \ref{ex9}}   \label{tab9}
        \end{table}
\end{example}

\section{Optimal Codes with Minimum-Maximum Error Probability}
\label{sec6}
There can be several linear optimal solutions in terms of least bandwidth for an IC problem but among them we try to identify the index code which minimizes the maximum number of transmissions that is required by any receiver in decoding its desired message. The motivation for this is that each of the transmitted symbols is error prone and the lesser the number of transmissions used for decoding the desired message, lesser will be its probability of error. Hence among all the codes with the same length of transmission, the one for which the maximum number of transmissions used by any receiver is the minimum, will have minimum-maximum error probability amongst all the receivers. We give a method to find the best linear solution in terms of minimum-maximum error probability among all the receivers and among all codes with the optimal length $c_{opt}$. For simplicity, throughout the rest of this section, the length  $c$ will mean the optimal length. Each  $T\in S(c)$ corresponds to a unique pair of encoding-decoding operations. For the same index code there can be more than one way of decoding at each receiver. For each set of decoding operations at the receivers, $T$ matrix differs. For a $T\in S(c)$, the corresponding matrix $T_{BC}$ has $n$ rows and $nc$ columns. Let $t_i, ~~ i=1,2,\cdots n,$ denote the $i-$th row of $T_{BC}.$ Denoting this $i-$th row as [$r_{i,1} r_{i,2}.....r_{i,nc}$], we define $t_{i,use}$ for this row as 
\begin{equation}
\label{tiuse}
t_{i, use}=\sum_{j=1}^{c} 1- (I_{{r_{i,j}=0}}  I_{{r_{i,j+c}}=0}...I_{{r_{i,j+(n-1)c}}=0})
\end{equation}
\noindent where $I_{z}$ is the indicator function which is one when event $z$ occurs.  Note that $t_{i,use}$ is the number of transmissions that are used by the $i-$th receiver $R_i$ out of $c$ transmissions.  Also define 
\begin{equation}
\label{tiusemax}
 t_{max}(T) = \underset{i} \max ~~ t_{i,use} 
\end{equation}
and 
\begin{equation}
\label{tiuseminmax}
T_{minmax} = arg \left\{ \underset{T \in S(c)} \min ~~ t_{max}(T) \right\} 
\end{equation}

Note that $T_{minmax}$ is not unique; there may be several such matrices in $S(c).$
\begin{theorem} For the optimal length $c$, any matrix $T_{minmax}$ in $S(c)$ gives an IC with the minimum-maximum error probability among all the receivers. Also, the matrix formed by taking every $n$-th  row of the corresponding $ F_{BC} $ matrix is an optimal linear solution in terms of minimum-maximum error probability  using $c$ number of transmissions.
\begin{proof}
 For the fixed optimal length $c$, $B_{BC}$ matrix will be as given in \eqref{eqn:Bsub}. For any $T\in S(c)$, the number of transmissions used by the $i$-th receiver is given by the number of non-zero entries in $i$-th row of $ B_{BC} $. 
When for example $t$-th $ \epsilon$ element in the $i$-th row of $ B_{BC} $ is zero, the $i$-th element of every $(t+(k-1)c)$-th column for $k=1$ to $ n$, in $ T_{BC} $  turns 0. Hence the number of transmissions used by it is proportional to the $t_{i,use}.$ Therefore, our claim is proved. Moreover the corresponding $F_{BC}$ is the matrix which decides the message flowing in the broadcast channels. So the matrix formed by taking every $n$-th row of $F_{BC}$ is the corresponding Index code .
\end{proof}
\end{theorem}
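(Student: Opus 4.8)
The plan is to reduce the statement about error probability to a purely combinatorial count of the transmissions each receiver uses, and then to read the code off the matrix $F_{BC}$ directly. First I would recall from \eqref{eqn:Bsub} that the $i$-th row of $B_{BC}$ carries the entries $\epsilon_{l'_1,R_i},\ldots,\epsilon_{l'_c,R_i}$, where $\epsilon_{l'_j,R_i}\neq 0$ exactly when receiver $R_i$ uses the $j$-th transmission; hence the number of transmissions $R_i$ uses equals the number of non-zero entries in that row. The first task is to show that this count is precisely $t_{i,use}$ from \eqref{tiuse}. Using the block structure around \eqref{Rsubi}, for a fixed $j$ the columns $j,\,j+c,\ldots,\,j+(n-1)c$ of $T_{BC}$ are exactly the columns forming ${\cal R}_j$, so the entries $r_{i,j},r_{i,j+c},\ldots,r_{i,j+(n-1)c}$ appearing in \eqref{tiuse} constitute the $i$-th row of ${\cal R}_j$, namely $\epsilon_{l'_j,R_i}\,[\beta_{X_1,l_j}\ \cdots\ \beta_{X_n,l_j}]$.

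The crucial point is that, because $c$ is optimal, Theorem 1 guarantees that no ${\cal R}_j$ is the all-zero matrix, so the row $[\beta_{X_1,l_j}\ \cdots\ \beta_{X_n,l_j}]$ is non-zero for every $j$. It follows that the $i$-th row of ${\cal R}_j$ vanishes if and only if $\epsilon_{l'_j,R_i}=0$, i.e. exactly when $R_i$ does not use transmission $j$. Thus the $j$-th term in \eqref{tiuse} contributes $1$ precisely for each transmission actually used by $R_i$, so $t_{i,use}$ equals the number of transmissions used by $R_i$ and agrees with the number of non-zero entries in the $i$-th row of $B_{BC}$. I expect this bookkeeping --- aligning the column pattern in \eqref{tiuse} with the column-blocks ${\cal R}_j$ and then invoking optimality to rule out the degenerate case where all the $\beta$'s vanish --- to be the main obstacle.

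With this identification established, the error-probability claim follows once the channel model is fixed: each of the $c$ broadcast symbols passes through an independent error-prone (binary symmetric) channel, so a receiver decoding from $t_{i,use}$ of them has decoding-error probability that is a strictly increasing function of $t_{i,use}$. The worst receiver's error probability is therefore controlled by $t_{max}(T)=\max_i t_{i,use}$, and minimizing $t_{max}(T)$ over $S(c)$ as in \eqref{tiuseminmax} minimizes the maximum error probability among the receivers; hence any $T_{minmax}$ yields an index code of optimal length $c$ with minimum-maximum error probability. Finally, to exhibit the code I would appeal to \eqref{gsubi}: in the $F_{BC}$ corresponding to $T_{minmax}$, rows $((i-1)n+1)$ through $((i-1)n+n)$ all equal the common row $t_i$, and $g_i=t_i\,A_{BC}\,\b{X}=\sum_{k=1}^n\beta_{X_k,l_i}x_k$. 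Selecting every $n$-th row of $F_{BC}$ therefore extracts one representative $t_i$ per block, and the resulting matrix produces the $c$ codewords $g_1,\ldots,g_c$, which is the desired optimal linear index code with minimum-maximum error probability.
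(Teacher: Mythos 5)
Your proof is correct and follows essentially the same route as the paper's: identify the number of transmissions used by $R_i$ with the number of non-zero $\epsilon$ entries in the $i$-th row of $B_{BC}$, equate this count with $t_{i,use}$ via the column-block correspondence between $T_{BC}$ and the matrices ${\cal R}_j$, and read the code off every $n$-th row of $F_{BC}$ using \eqref{gsubi}. If anything, your argument is tighter than the paper's: by invoking Theorem 1 to guarantee each $\beta$-row $[\beta_{X_1,l_j}\ \cdots\ \beta_{X_n,l_j}]$ is non-zero, you establish the converse direction (a vanishing $i$-th row of ${\cal R}_j$ forces $\epsilon_{l'_j,R_i}=0$), so the identification is an exact equality rather than the paper's loose ``proportional to $t_{i,use}$.''
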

From the theorem above it is clear that to minimize the maximum probability of the receivers one needs to pick that $T \in S(c)$ for which the the maximum number of nonzeros in a row in the corresponding $B_{BC}$ matrix is minimized.   
\begin{example}
\label{ex10}
 Let $ m=n=3$. Each $R_{i}$ wants $x_{i}$ and knows $x_{i+1}$, where $+$ is mod-3 addition. The optimal length of a linear IC solution for this problem is $2$
  For Example 1, we found out the optimal IC's : They are 1. $\mathfrak{C}_{1}: x_{1}   \oplus   x_{2} $, $ x_{2}   \oplus   x_{3} $ 2. $\mathfrak{C}_{2}: x_{1}   \oplus   x_{3} $, $ x_{3}   \oplus   x_{2} $ 3. $\mathfrak{C}_{3}: x_{1}  \oplus   x_{3} $, $ x_{1}   \oplus  x_{2} $. For all the three, the maximum number of transmissions used by any receiver is $2$. This is verified below. We find the $T_{B}$ matrices for each case as follows:\\
 \[T_{B,1} =\left[ \begin{array}{cccccc}
 1 & 0 & 1 & 0 & 0 &0\\
 0 & 0 & 0 & 1 & 0 &1\\
 1 &0 & 1& 1& 0& 1
 \end{array} \right]\]\\
  \[T_{B,2} =\left[ \begin{array}{cccccc}
 1 & 0 & 0 & 1 & 1 &1\\
 0 & 0 & 0 & 1 & 0 &1\\
 1 &0 & 0& 0& 1& 0
\end{array} \right]\]\\
\[T_{B,3} =\left[ \begin{array}{cccccc}
 0 & 1 & 0 & 0 & 0 &1\\
 1 & 1 & 1 & 0 & 0 &1\\
 1 &0 & 1& 0& 0& 0
 \end{array} \right]\]\\
 It can be verified that $t_{max}(T)$ for all the three is 2. Hence 2 transmissions at most are used by any receiver to decode in all the three cases. The BEP (Bit Error Probability) versus SNR curves for each of the three codes at various receivers are given in Fig.\ref{Code13}, Fig.   \ref{Code23}, Fig. \ref{Code33}. We considered BPSK modulation in Rayleigh faded channel. In Fig.\ref{max3}, the worst case BEP curves for each of the three codes are plotted. We can see that the curves lie on top of each other which proves our claim that all the three codes are equally good in terms of minimum-maximum error probability.\\
\end{example}
\begin{example}
\label{ex11}
Let $m=n=4$. $R_{i}$ wants $x_{i}$ and knows $x_{i+1}$ where $+$ is modulo-4 operation. $R_{3}$ knows $x_{1}$ also.

The optimal length  $c=3$. Tables \ref{tab1} and \ref{tab2} gives the $t_{max}(T)$ for each of the optimal linear codes. The minimum $r_{min}(T)$ is 2. The BER versus SNR curve for $\mathfrak{C}_{30}$ whose $r_{max}(T)=2$ is as in Fig .\ref{30}.  The BER versus SNR curve for $\mathfrak{C}_{29}$ whose $t_{max}(T)=3$ is as in Fig .\ref{29}. The worst case Performance of both codes are plotted in Fig. \ref{worstmu}.

We can observe that the worst performance of $\mathfrak{C}_{30}$ is better than worst performance of $\mathfrak{C}_{29}$.
\end{example}

\begin{figure}[htbp]
\centering
\includegraphics[height=8 cm, width=9.75 cm]{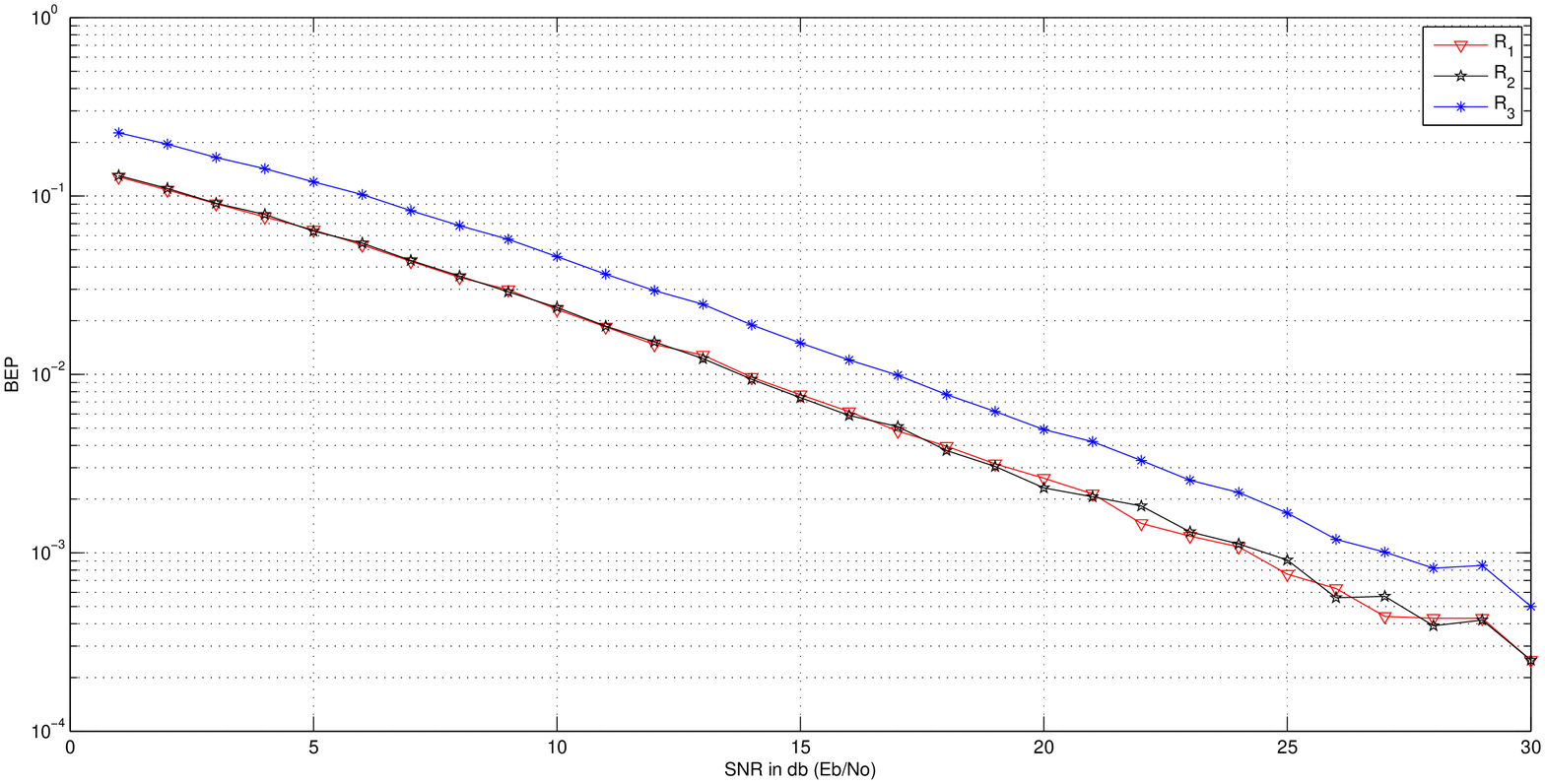}
\caption{BER versus SNR (db) curve for $\mathfrak{C}_{1}$ at all the receivers for Example 1. }
\label{Code13}
\end{figure}
 \begin{figure}[htbp]
\centering
\includegraphics[height=8 cm, width=9.75 cm]{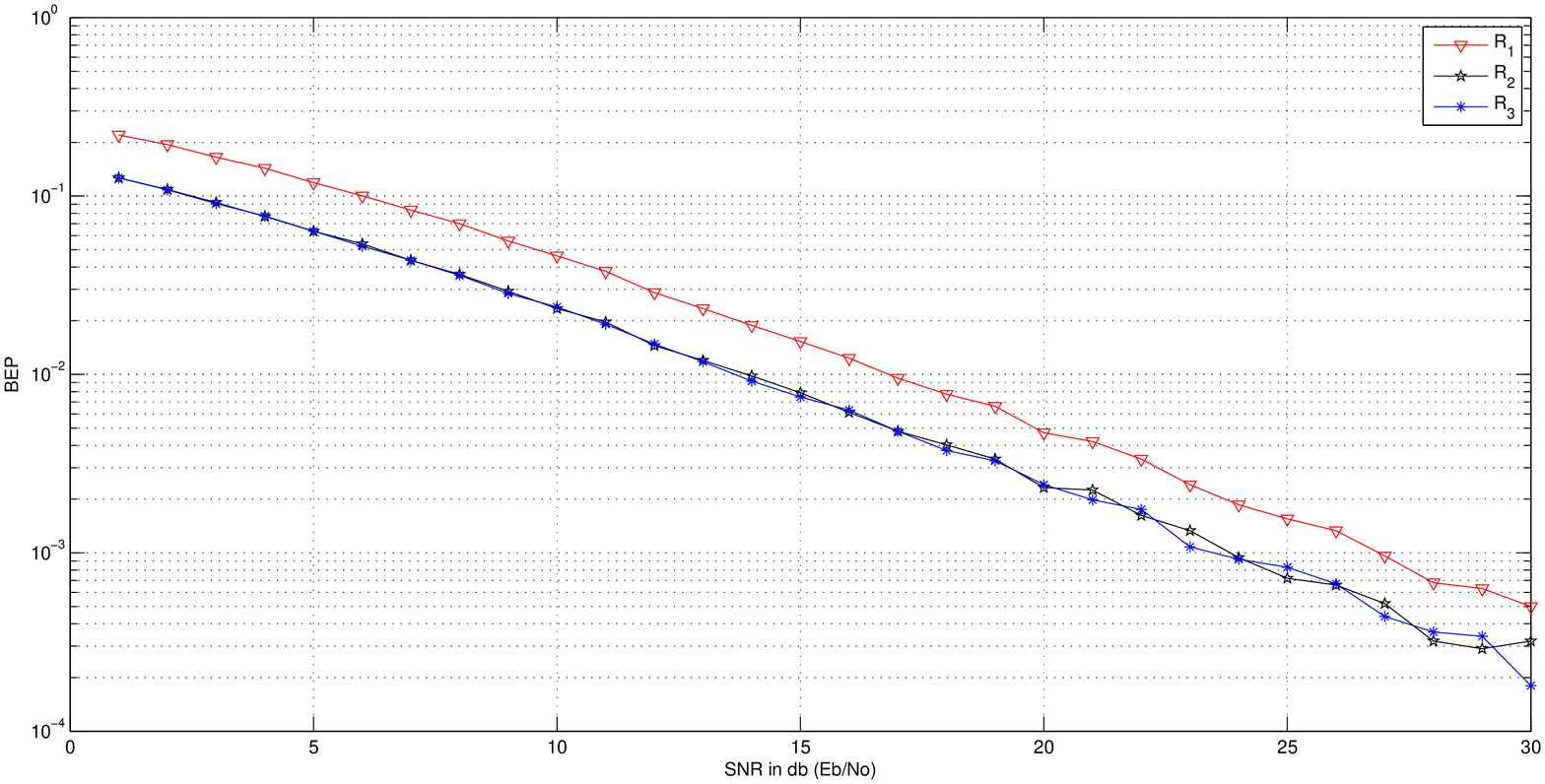}
\caption{BER versus SNR (db) curve for $\mathfrak{C}_{2}$ at all the receivers for Example 1. }
\label{Code23}
\end{figure}
 \begin{figure}[htbp]
\centering
\includegraphics[height=8 cm, width=9.75 cm]{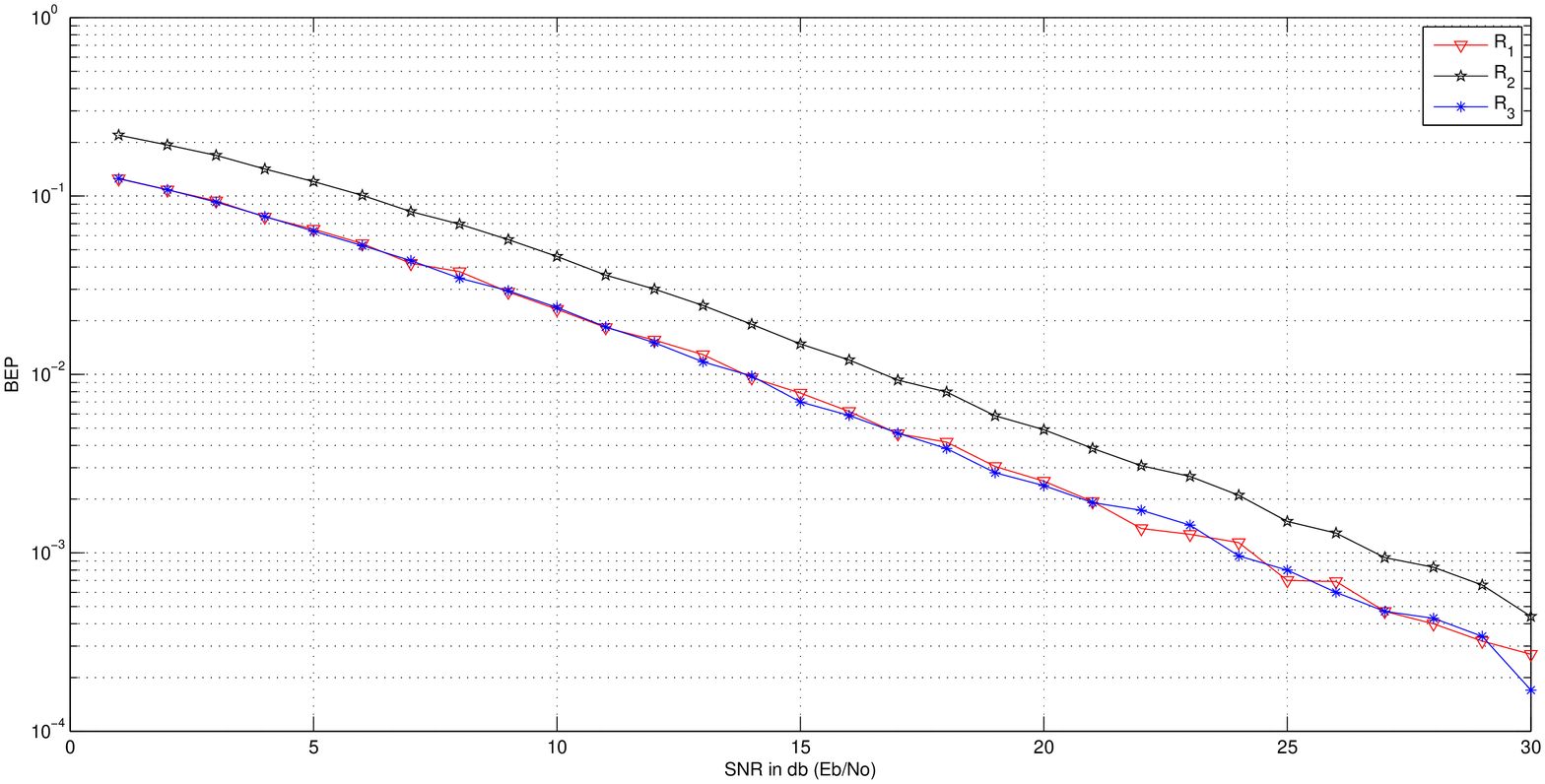}
\caption{BER versus SNR (db) curve for $\mathfrak{C}_{3}$ at all the receivers for Example 1. }
\label{Code33}
\end{figure}
\begin{figure}[htbp]
\centering
\includegraphics[height=8 cm, width=9.75 cm]{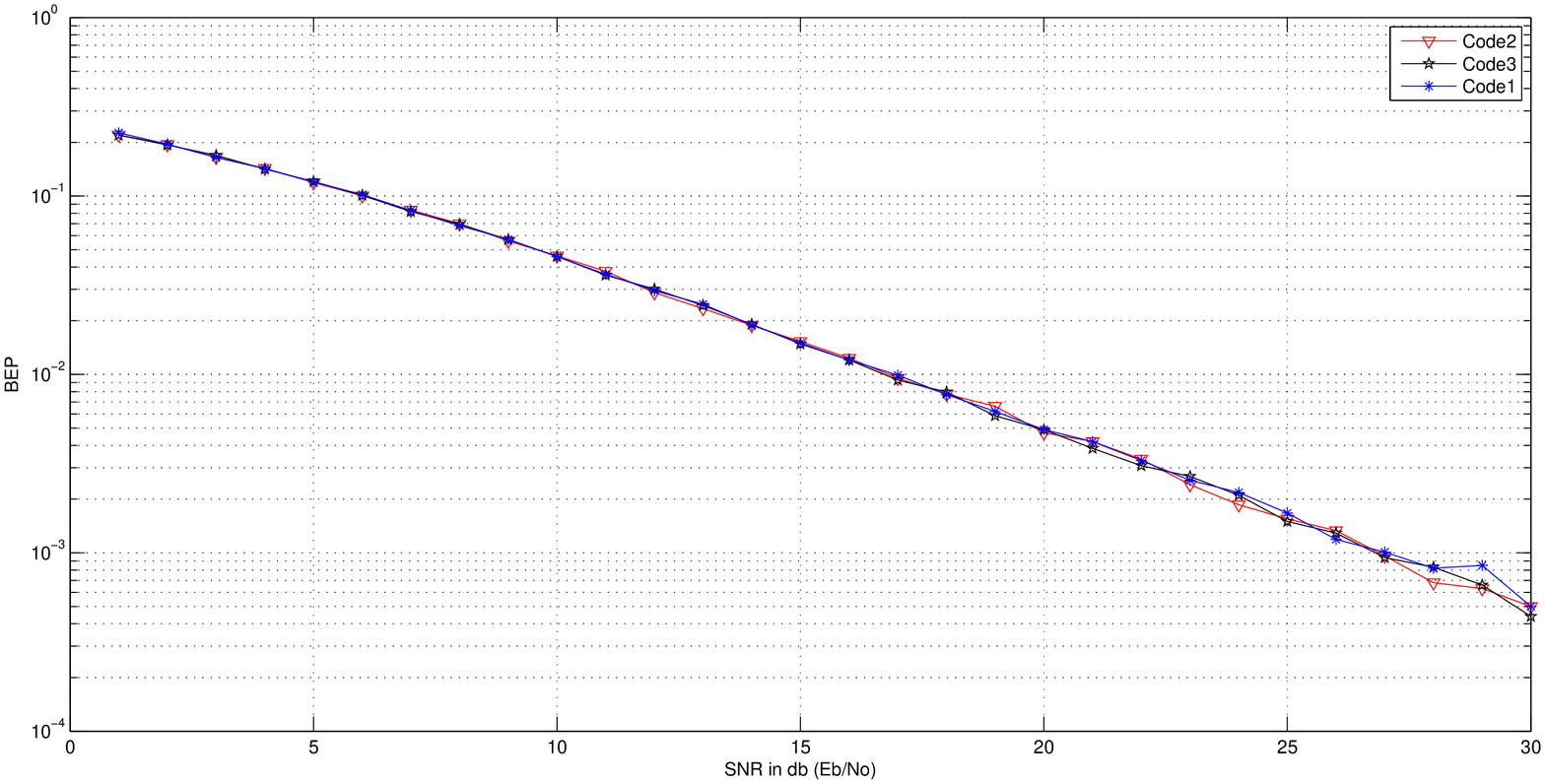}
\caption{ Worst case BER versus SNR (db) curves for each of the three codes for Example 1. }
\label{max3}
\end{figure}
\begin{figure}[htbp]
\centering
\includegraphics[height=8 cm, width=9.75 cm]{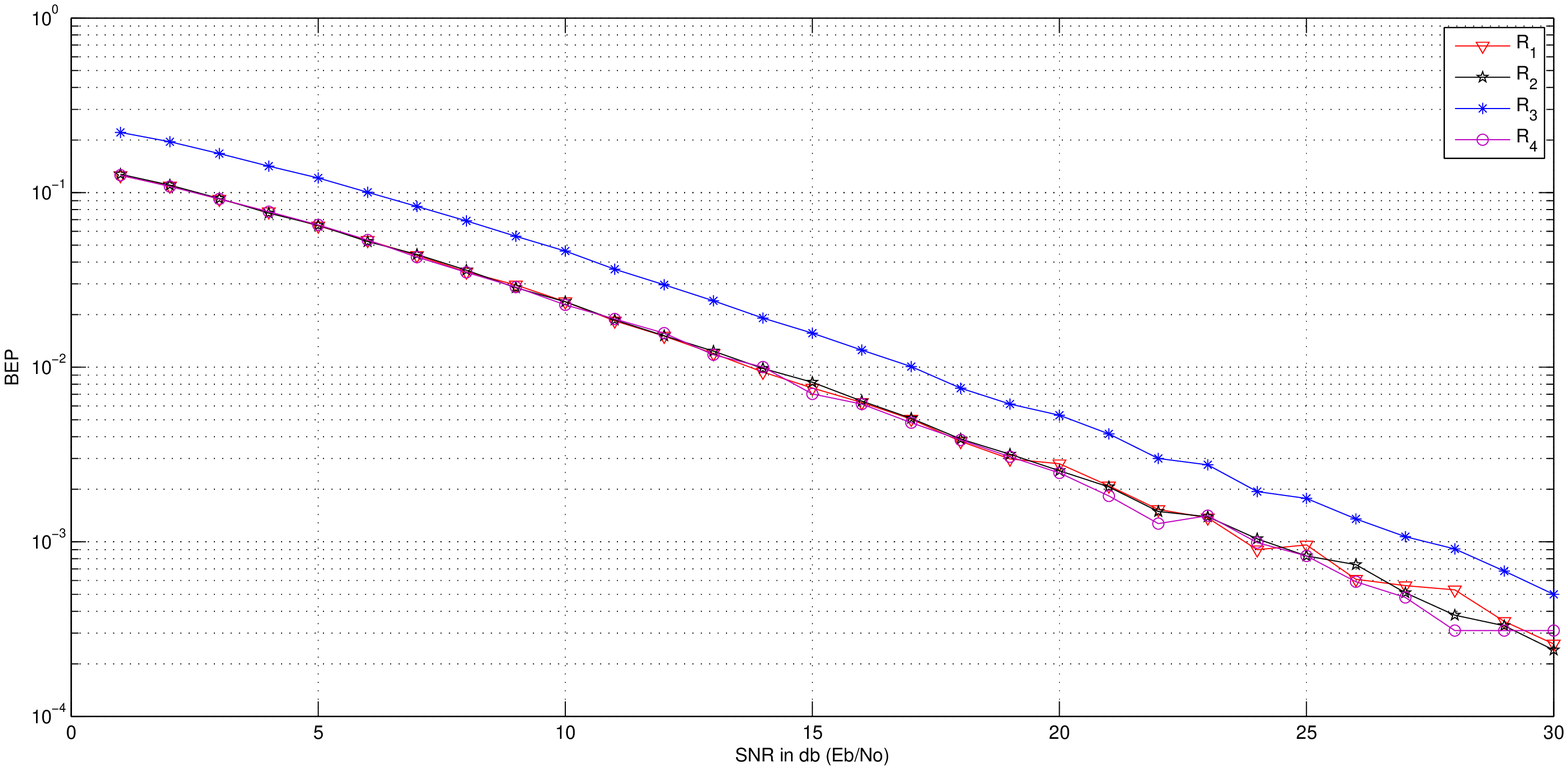}
\caption{BER versus SNR (db) curve for $\mathfrak{C}_{30}$ at all the receivers for Example \ref{mu}}
\label{30}
\end{figure}
\begin{figure}[htbp]
\centering
\includegraphics[height=8 cm, width=9.75 cm]{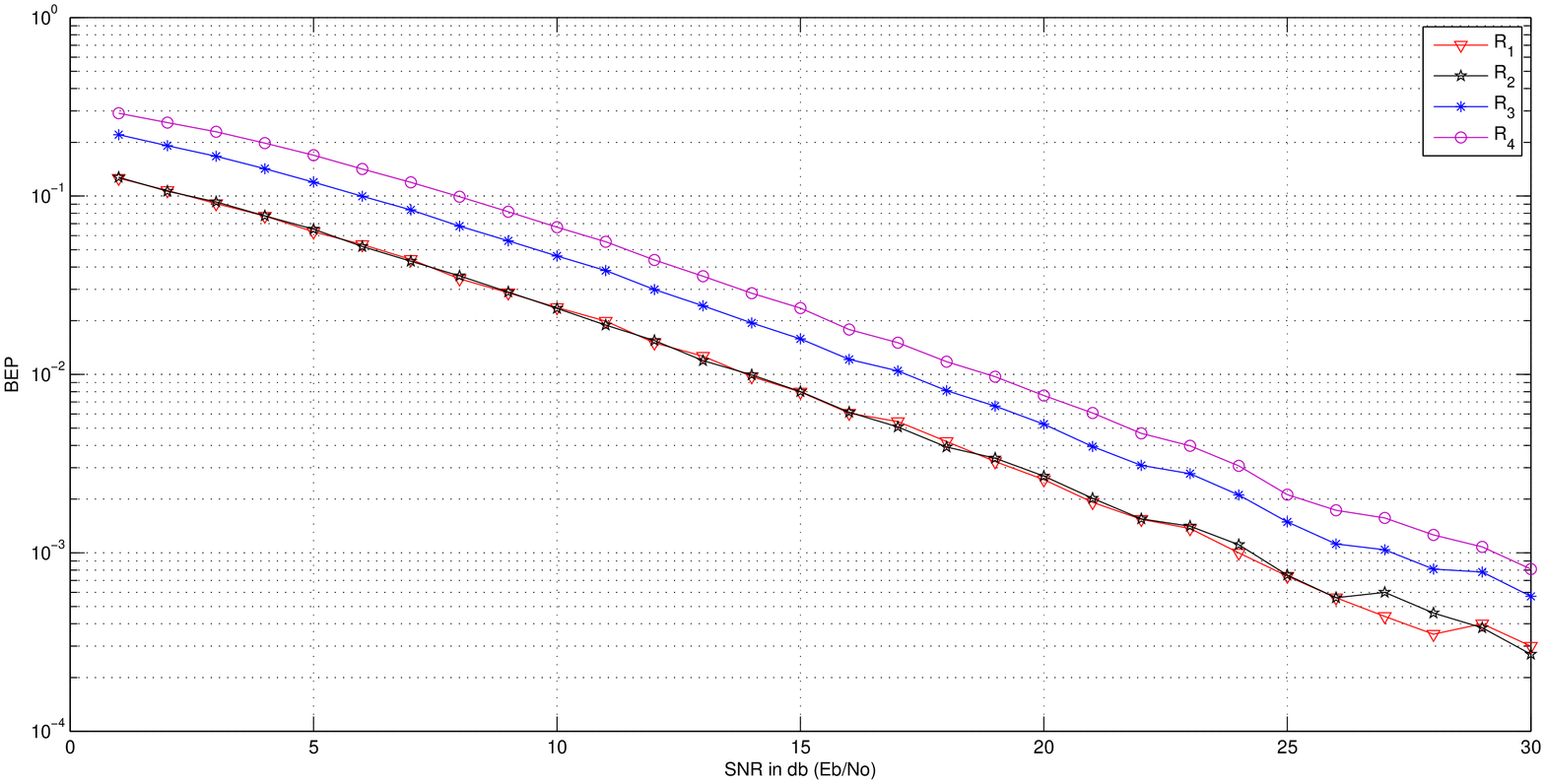}
\caption{BER versus SNR (db) curve for $\mathfrak{C}_{29}$ at all the receivers for Example \ref{mu}}
\label{29}
\end{figure}
\begin{figure}[htbp]
\centering
\includegraphics[height=8 cm, width=9.75 cm]{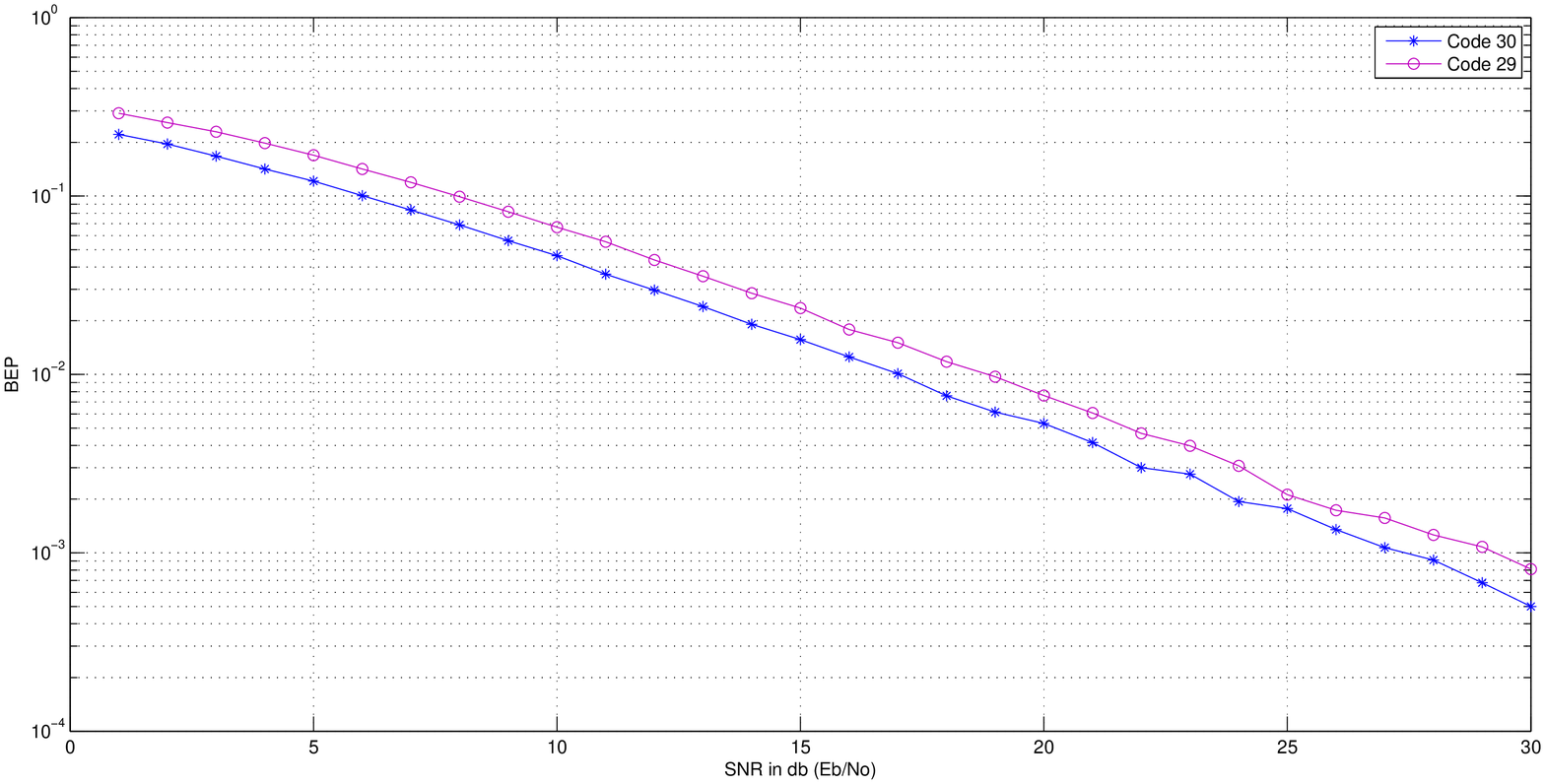}
\caption{Worst case BER versus SNR (db) curves for codes $\mathfrak{C}_{30}$ and $\mathfrak{C}_{29}$   for Example \ref{mu}}
\label{worstmu}
\end{figure}

 \section{Discussion}
\label{sec7}
  For an optimal $c$, the maximum number of index codes possible is bounded by $2^{nc}$. In this paper we have given a lower bound also. This lower bound is satisfied with equality for a single unicast problem in which $\mid K_{i}\mid=1 $ and $K_{i} \cap K_{j}$=0, for $i \neq j, i,j$ = 1 to n . We would like to extend this work to find out least complexity algorithms which finds IC solutions by matrix completion. Harvey \textit{et. all} in [9] gives such algorithms for  multicast network codes. However what we have is a general problem and hence their results are not applicable. We have followed an approach which is different and simpler than Koetter and Medard's [5] for this specific class of network coding problem.
 \section*{Acknowledgement}
 The authors would like to thank Anoop Thomas for the useful discussions on the topic.


\end{document}